\newtheorem{theorem}{Theorem}[section]
\newtheorem{lemma}{Lemma}[section]
\newtheorem{proposition}{Proposition}[section]
\newenvironment{proof}{\textbf{Proof.}}{$\Box$}
\newcommand{\uvec}[1]{\boldsymbol{\textbf{#1}}}
\def\SS{{\mathbb{S}}}
\title{Two-stage indirect determinantal sampling designs}
\author{V. Loonis\footnote{INSEE, Paris, France. Email: vincent.loonis@insee.fr}.}
\date{}
\begin{document}
\maketitle

\begin{abstract}
A key feature of determinantal sampling designs is their capacity to provide known and parametrisable inclusion probabilities at any order. This paper aims to demonstrate how to effectively leverage this characteristic, highlighting its implications by addressing a practical challenge that arises when managing a network of face-to-face surveyors. This challenge is formulated as an optimization problem within the framework of two-stage indirect sampling, utilizing the Generalized Weight Share Method (GWSM). A general closed-form expression for the optimal weight matrix defined by the GWSM is derived, and based on a reasonable hypothesis, a formula for the optimal inclusion probabilities used in the second stage is provided. The implementation of the global optimization process is illustrated with real data, assuming that the intermediate and the second stage sampling designs are determinantal. Additionally, given these designs, closed-form expressions for the target first-order and joint inclusion probabilities are presented, thus paving the way for an alternative application of the Horvitz-Thompson estimator for evaluating any total within the target population. In short, determinantal sampling designs prove to be a versatile and useful tool for addressing practical challenges involving high-order inclusion probabilities.
\end{abstract}

\section*{Introduction}

Determinantal point processes are point processes widely used in various fields: random matrices, mathematical physics and machine learning. Their application in survey sampling leads to the class of determinantal sampling designs introduced by \citet{loonis2019determinantal}. These designs represent a broad family of sampling methods parameterised by the set of Hermitian contracting matrices \citep{macchi1975coincidence, soshnikov2000determinantal}. Let $U$ be a population of size $N$, and $\mathbf{\mathbf{K}}$, of size $(N\times N)$, be one of these matrices. Thus $DSD(\mathbf{\mathbf{K}})$ will denote the corresponding determinantal sampling design on $U$. The matrix $\mathbf{\mathbf{K}}$ is referred to as the kernel of $DSD(\mathbf{\mathbf{K}})$. A random sample $\SS$ from $U$ whose law is determinantal is thus written $\SS \sim  DSD(\mathbf{\mathbf{K}})$. Theoretically, any quantity describing $DSD(\mathbf{\mathbf{K}})$ can be expressed as a function of $\mathbf{\mathbf{K}}$. This might include, for example, the inclusion probabilities at any order or the variance of the Horvitz-Thompson estimator \citep{horvitz1952generalization}. Consequently, by applying suitable optimisation algorithms over the set of Hermitian contracting matrices, it is possible to derive the optimal determinantal sampling design based on a given optimisation criterion. For instance, \citet{loonis2019determinantal} and \citet{loonis2023} employed such a strategy to develop balanced determinantal sampling designs. Such designs, whether determinantal or not, aim to provide near-perfect estimates of a known parameter, that is a function of some auxiliary information. The underlying hypothesis assumes that a balanced sampling design, providing accurate estimates for a known parameter, should also be effective for estimating an unknown parameter that is a function of a variable correlated with the auxiliary information \citep{deville2004efficient}. The potential applications of determinantal sampling designs extend far beyond the given example. In this regard, indirect sampling \citep{deville2006indirect} offers a particularly promising avenue for harnessing their advantageous properties.
\paragraph{}
The general framework for this indirect sampling involves two populations, $U^A$ and $U^B$, with respective sizes $N^A$ and $N^B$. $U^A$ serves as an intermediate population, with its units indexed as $i$ (and $j$). $U^B$ is the target population, with its units designated $k$ (and $l$). A correspondence exists between the units of both populations, described by a $(N^A \times N^B)$ link matrix $\mathbf{L}^{\!\!AB}=[\ell^{AB}_{ik}]$; in the case where units $i$ and $k$ are linked then $\ell_{ik}^{AB} = 1$ , and $\ell_{ik}^{AB} = 0$ in the contrary case. A random sample $\SS^A$ is drawn from $U^A$ according to a given sampling design $\mathcal{P}^A$, whose first order inclusion probabilities $\pi^{A}_i=\Pr(\{i\}\subseteq \SS^A)$ are strictly positive and often prescribed, such that $\pi^{A}_i=\Pi^{A}_i$, where $\mathbf{\Pi}^A={(\Pi^A_1,\cdots,\Pi^A_{N^A})}^\intercal$ is a given vector of probabilities. The units in $U^B$ that are linked to those in $\SS^A$ are included in the target random sample $\SS^{1B}$, whose generally unknown probability law is denoted by $\mathcal{P}^{1B}$. The selection process of $\SS^{1B}$ is generally referred to as indirect sampling or to as one-stage indirect sampling in this article (see Figure \ref{Fig1}).
\begin{figure}[h!]
\begin{subfigure}[t]{0.47\textwidth}
\centering
\includegraphics[scale=1]{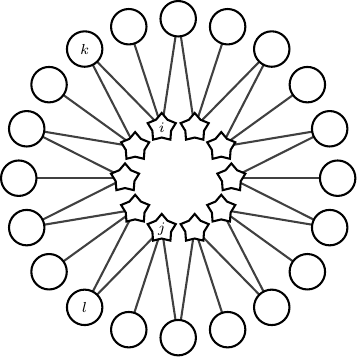}\vfill
    \caption{\raggedright Two related populations: $U^A$ (star), $U^B$ (circle)\label{FigIndia}, $\ell^{AB}_{ik}=1$ (edge)}
    \end{subfigure}\label{Fig1a}
    \begin{subfigure}[t]{0.47\textwidth}
\centering
\includegraphics[scale=1]{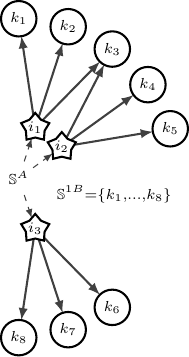}\vfill
    \caption{\raggedright Drawing $\mathbb{S}_A$ from $U_A$, deducing $\SS^{1B}$ from $U^B$}
    \end{subfigure}
    \caption{One-stage indirect sampling}\label{Fig1}
\end{figure}
\paragraph{}
Computing the first order target inclusion probabilities, $\pi^{1B}_k=\Pr(\{k\}\subseteq \SS^{1B}),$ $k=1,\cdots,N^B,$ happens to be a difficult task. To evaluate them, one has to calculate: 
\begin{equation}
    \pi_k^{1B} = \Pr\left( \exists i \in U_k^A \mid i \in \mathbb{S}^A \right) = 1 - \Pr\left( U_k^A \subseteq {\mathbb{S}^A}^{\,c} \right),
\end{equation} where $U^A_k$ stands for the set of units of $U^A$ that are linked with $k$: $U^A_k=\{i \in U^A | \ell^{AB}_{ik}=1\}$, and ${\SS^A}^c$ for the complementary of $\SS^A$ in $U^A$. That is to say one has to know the inclusion probabilities of potentially high orders of ${\mathcal{P^A}}^c$, being the probability law of ${\SS^A}^c$. Apart from some intermediate sampling designs, such as  stratified simple random sampling \citep{lavallee2013indirect,Bodet2022}, these quantities are unreachable, and the $\pi^{1B}_k$ remain unknown. This result is even truer for the joint inclusion probabilities $\pi^{1B}_{kl}=\Pr(\{k,l\}\subseteq \SS^{1B})$, or the target inclusion probabilities of a higher order. Consequently, the utilisation of the Horvitz-Thompson estimator, $\smash{\hat{t}^{HT}_{\mathbf{y}^B}=\sum_{k\in \SS_1^B}y^B_k/\pi_{k}^{1B}}$, to estimate the total value $t_{\mathbf{y}^B}$ of a variable of interest $\mathbf{y}^B$ on the target population is often precluded. 

\paragraph{}
To overcome the general lack of information regarding the target inclusion probabilities for estimating $t_{\mathbf{y}^B}$, \citet{deville2006indirect} propose the utilisation of an alternative unbiaised parametrised class of linear estimators, whose weights are random variables. This method is referred to as the General Weight Share Method (GWSM). The multidimensional parameter, chosen by the survey statistician, consists of a $(N^A\times N^B)$ weight matrix $\mathbf{\Theta}^{\!AB}=[\theta^{AB}_{ik}]$, subject to certain constraints that guarantee the unbiasedness of the estimators.
\paragraph{}
Indirect sampling and the GWSM are particularly valuable where there is no sampling frame available to facilitate direct random unit selection from $U^B$, while one does exist for a related population $U^A$. This situation arises, for example, when sampling homeless individuals ($U^B$) via the various social services ($U^A$) provided to them \citep{vitiis2014implementing}; tourists ($U^B$) via their visits to specific locations ($U^A$) \citep{deville2006extension}; or postmen ($U^B$) via the addresses ($U^A$) they deliver to \citep{medous2025optimal}. In such cases, it is not necessary to know the entire matrix $\mathbf{L}^{\!\!AB}$ to implement a \textit{minimal} version of the GWSM, which may well result in a high variance of the GWSM estimators. Beyond these practical applications, indirect sampling has theoretical implications for the weighting of panel surveys \citep{ernst2009weighting}, multi-frame sampling \citep{deville2006indirect}, and capture-recapture processes \citep{lavallee2012capture}. In this article, we will further explore the coordination of spatial sampling, drawing on the examples provided by \citet{Hall2008} and \citet{Cheval2022}. In this latter case, a sampling frame for $U^B$ might exist, but various constraints make indirect sampling more suitable. The existence of a sampling frame for $U^B$ is therefore significant. It implies that $\mathbf{L}^{\!\!AB}$ is fully known, making the search for optimal GWSM estimators easier.
 
\paragraph{}
To help select the appropriate parameter matrix, when $\mathbf{L}^{\!\!AB}$ is fully known, \citet{deville2006indirect} identify the optimal matrix $\mathbf{\Theta}^{\!AB}_{opt}(\mathbf{y}^B)$ that minimises the variance of the GWSM estimators for measuring $t_{\mathbf{y}^B}$. In the case where $\mathbf{\Theta}^{\!AB}_{opt}(\mathbf{y}^B)$ does not depend on $\mathbf{y}^B$, optimality is defined as being \textit{strong}. Except for certain specific intermediate sampling designs, such as Poisson sampling or Simple Random Sampling (SRS), and particular types of link matrices referred to as \textit{many-to-one} \citep{medous2025optimal}, achieving this condition is in practice very difficult.  As a consequence,an optimal matrix that is independent of $\mathbf{y}^B$ may not exist. This being so, the authors consider a less demanding optimality which they refer to as \textit{weak} optimality. It consists in looking for the GWSM estimator whose variance of the random weights is minimal. To achieve this goal, the authors show that \textit{weak} optimality is equivalent to considering the matrices $\mathbf{\Theta}^{\!AB}_{opt}(\mathbf{y}^B_q)$ for the very specific and finite set $\{\mathbf{y}^B_1,\cdots,\mathbf{y}^B_{N^B}\}$, such that $\mathbf{y}^B_q=e_q$, the $q^{th}$ canonical vector of $\smash{{\mathbb{R}^{N}}^{\scriptscriptstyle B}}$. They show there is indeed a unique matrix that minimises the variance of the GWSM estimators for measuring each $t_{\mathbf{y}_q^B}$, $q=1,\cdots,N^B$.

\paragraph{}For any other given variable $\mathbf{y}^B$,  \citet{deville2006indirect} argue that the optimal matrix $\mathbf{\Theta}^{\!AB}_{opt}(\mathbf{y}^B)$ has limited usefulness. Calculating this matrix implies knowledge of $y^B_k$ for all the units of the population $U^B$. In practice this assumption is unrealistic. If this was the case a sample for estimating $t_{\mathbf{y}^B}$ would not need to be drawn. In this article, given that a sampling frame is available for $U^B$ and that $\mathbf{L}^{\!\!AB}$ is fully known, we can nuance their reasoning by further considering a set of auxiliary variables $\mathbf{X}^B$. A matrix $\smash{\mathbf{\Theta}^{\!AB}_{opt}(\mathbf{X}^B)}$ will then be optimal if it achieves, to the extent possible, a balanced indirect sampling with respect to $\mathbf{X}^B$. 
\paragraph{}
A final practical challenge in the implementation of $\mathbf{\Theta}^{\!AB}_{opt}(\mathbf{X}^B)$ lies in the need to know the values of the joint inclusion probabilities $\pi^{A}_{ij} = \text{pr}(\{i,j\} \subseteq \mathbb{S}^A)$ for all the pairs of units $(i,j)$ within $U^A \times U^A$. For most typical and somewhat complex $\mathcal{P}^A$ sampling designs, these probabilities remain unknown. As a result, $\mathbf{\Theta}^{\!AB}_{opt}(\mathbf{X}^B)$ often cannot be computed in practice. 
\paragraph{}
In this article, we demonstrate that the general framework of indirect sampling and the GSWM described above can benefit in many respects from an intermediate determinantal sampling design: $\mathcal{P}^A = DSD(\mathbf{\mathbf{K}}^A)$. First, the target simple and joint inclusion probabilities become accessible, whatever the kernel $\mathbf{\mathbf{K}}^A$. Moreover, as these probabilities are parameterised by $\mathbf{\mathbf{K}}^A$, one can adjust $\mathbf{\mathbf{K}}^A$ to achieve, as closely as possible, specific desired features. These include, for example, prescribed target first-order probabilities $\pi^{1B}_k = \Pi^{1B}_k$, where $\mathbf{\Pi^{1B}} = (\Pi^{1B}_1, \cdots, \Pi^{1B}_{N^B})^\intercal$ is a given vector of probabilities.
Secondly, $\mathbf{\Theta}^{\!AB}_{opt}(\mathbf{X}^B)$ can be computed in practice for any kernel $\mathbf{\mathbf{K}}^A$, as the intermediate joint probabilities $\pi^{A}_{ij}$ are known. Here again, since these probabilities are parameterised by $\mathbf{\mathbf{K}}^A$, this kernel can be fine-tuned to minimize even further the optimal variance of the GSWM estimators over the set of Hermitian contracting matrices.
\paragraph{}
Finally, these advantages of intermediate determinantal sampling designs remain relevant in the more general context of the two-stage indirect sampling. This approach, which consists in selecting a second sample from $\SS^{1B}$, is often employed to better manage the target sample size that is typically random in the one-stage case \citep{deville2006indirect}.

\paragraph{} 
To achieve our primary goal of showcasing the versatility of determinantal sampling designs, this article is structured into three sections. The first section revisits key results on indirect sampling and the GSWM, focusing on the two-stage case that encompasses that of the one-stage. While building on existing literature \citep{deville2006indirect,lavallee2007indirect}, we introduce several new contributions, including a closed-form expression for the two-stage variance of the GSWM (Proposition \ref{variance}), which directly leads to a closed-form formula for the two-stage optimal GSWM weight matrix (Proposition \ref{PropOpt}). Additionally, we provide a closed-form expression for the second-stage optimal inclusion probabilities, subject to an additional - yet reasonable - assumption (Proposition \ref{PiOpt}).  Building on \citet{loonis2019determinantal} and \citet{loonis2023}, the second section introduces determinantal sampling designs and the advantages they offer when applied to intermediate- or second-stage indirect sampling. Two main innovations are presented in this section: closed-form formulae for target first-order and joint inclusion probabilities (Proposition \ref{Probacible}) and improved methods for handling optimisation problems. The third section consolidates our findings and assesses their effectiveness in solving one of the particular problems that arise when managing a network of face-to-face surveyors. The core challenge can be reduced to drawing two approximately equal-sized samples from two distinct populations, with each sample balanced with respect to a specific set of auxiliary variables. Additionally, each unit selected in one sample should closely match a unit from the other sample based on a defined distance criterion (see Figure \ref{Fig2}). Our solution relies on the simultaneous use of indirect and determinantal sampling designs. 

\begin{figure}[h!]
\centering
\begin{subfigure}{0.47\textwidth}
\vtop{
\centering
\includegraphics[scale=0.4,trim=6pt 5pt 6pt 5pt, clip, width=0.8\textwidth]{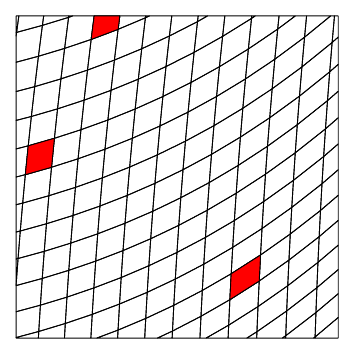}
\caption{\raggedright Drawing a sample $\mathbb{S}^A$ from $U_A$, \textit{balanced} with respect to a set $\mathbf{X}^A$ of auxiliary variables.}
}
\end{subfigure}
\hfill
\begin{subfigure}{0.47\textwidth}
\vtop{
\centering
\includegraphics[scale=0.4,trim=6pt 5pt 6pt 5pt, clip, width=0.8\textwidth]{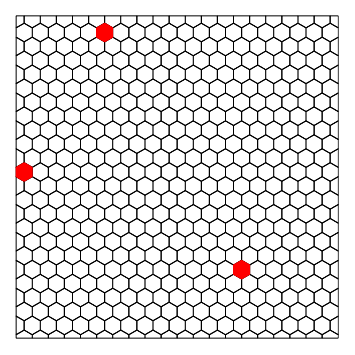}
\caption{\raggedright Drawing a close equal-sized sample $\mathbb{S}^{2B}$ from $U^B$, \textit{balanced} with respect to a set $\mathbf{X}^B$ of auxiliary variables.}
}
\end{subfigure}
\caption{Coordinating samples for face-to-face surveys}\label{Fig2}
\end{figure}
\section{Two-stage indirect sampling and the GWSM}
\subsection{Notations}

In this section and throughout the article, \( \mathbf{I}_N \) denotes the \( (N \times N) \) identity matrix, and \(\mathbf{J}_N \) denotes the \( (N \times N) \) all-ones matrix. The vector \( \uvec{1}_N \in \mathbb{R}^N \) consists exclusively of ones. For any matrix \( \mathbf{M} \), the vector \( \mathbf{v}_\mathbf{M} \) is defined as the column vector obtained by stacking the columns of \( \mathbf{M} \) on top of each other. The notation \( \mathbf{M}_{\mid s} \) refers to the submatrix of \( \mathbf{M} \) formed by selecting the rows and columns indexed by the set \( s \subseteq \{1,\dots,N\} \). 
For any vector \( \mathbf{x} = (x_1, \cdots , x_N)^\intercal \in \mathbb{R}^N \), the matrix \( \mathbf{D}_\mathbf{x} \) denotes the \( (N \times N) \) diagonal matrix whose diagonal entries are \( x_1, \cdots, x_N \). The symbols \( \odot \) and \( \otimes \) respectively denote the Hadamard (element-wise) product and the Kronecker product between matrices. A variable of interest is defined as a variable whose values are known, in practice, only for the selected sample units. In contrast, an auxiliary variable is observed for the entire population and can be exploited to improve sampling procedures to estimate a function of the variable of interest. These two types of variables are denoted by $\mathbf{y}$ and $\mathbf{x}$, respectively, and are represented as vectors in $\mathbb{R}^N$.

\subsection{General settings}
Based on the presentation and the notations introduced earlier, the first stage  of indirect sampling involves drawing a random sample $\SS^A$ from $U^A$ and identifying the target sample $\mathbb{S}^{1B}$, which includes all units from $U^B$ that are linked to at least one unit in $\SS^A$: 
\begin{equation}\label{Equation_Indirect}
\SS^{1B}=\bigcup_{i\in \SS^A} U^B_i,
\end{equation}
where, symmetrically to $U^A_k$, $U^B_i=\{k \in U^B | \ell^{AB}_{ik}=1\}$, $i=1,\cdots,N^A$.
\paragraph{}
The second stage consists in selecting, for each unit $i$ in $\mathbb{S}^A$, a subsample of units from $U^B_i$. Let $\mathbb{S}^B_i$ be a random sample from $U^B_i$, whose probability law is the sampling design $\mathcal{P}^B_i$, $i=1,\cdots,N^A$, that verifies the following classical assumption for the two-stage sampling designs: 

\begin{equation}\label{H1}
H1\text{ : }\forall i=1,\cdots,N^A\text{, }\SS_i^B\text{ is independent from }\SS^A.
\end{equation}

Unless explicitly stated otherwise, it is not required that $\SS_i^B$ be independent from $\SS_j^B$ for all $i \neq j$. The global set $\smash{\mathcal{P}^{AB}=\{\mathcal{P}_i^B\}_{i=1}^{N^A}}$ defines a sampling design on $\smash{U^A \times U^B}$. Its first-order inclusion probabilities are given by $\pi^{AB}_{(i,k)}=\Pr({k}\subseteq \mathbb{S}^B_i)$. For any unit $k$ not belonging to $U^B_i$, it naturally follows that $\pi^{AB}_{(i,k)}=0$. Consistent with classical unbiasedness constraints, it is assumed that $\pi^{AB}_{(i,k)}>0$ for the remaining units. These probabilities are assembled in the $(N^A \times N^B)$ matrix $\mathbf{\Pi}^{1AB}$. The joint probabilities of $\mathcal{P}^{AB}$ are represented by the quantities $\pi^{AB}_{(i,k)(j,l)}=\Pr({k}\subseteq \mathbb{S}^B_i,{l}\subseteq \mathbb{S}^B_j)$, which are stored in the $(N^A N^B \times N^A N^B)$ matrix $\mathbf{\Pi}^{2AB}$. Section \ref{implement} provides details on how to implement this matrix, with particular emphasis on its specific indexation.. 
\paragraph{}
With these tools, the two-stage indirect sampling, described in Figure \ref{Fig3}, writes:
$$
\SS^{2B}=\bigcup_{i\in \SS^A} \SS^B_i.$$
%To conclude this part by noting that although one of the goals of the two-stage indirect sampling design is to help manage the final target sample size, the latter remains random. Indeed, a unit $k$ in $U^B$ can be selected several times, once for each unit in $\SS^A \cap U_k^A$. Fixing the final size requires considering a global sampling design $\mathcal{P}^{AB}$ such that $\pi^{AB}_{ik,jk}=0$ for all $k$ and for all $i$ different from $j$. Developing such a process appears to be a challenging - yet promising - task. It will not be addressed in this article.%

\subsection{Estimating a total }\label{Sec_Total}
For given matrices $\mathbf{L}^{\!\!AB}$ and $\mathbf{\Theta}^{\!AB}$, the following class of linear estimators that are used at the second stage to estimate $t_{\mathbf{y}^B}$ is derived from the GWSM: 

\begin{align}\label{Estimator}
\hat{t}_{\mathbf{y}^B} &= \sum_{k \in U^B} w_k y^B_k, \quad \text{where} \\
w_k &= \sum_{i \in U^A} \frac{\ell_{ik} \theta^{AB}_{ik} \uvec{1}_{\{i \in S^A\}} \uvec{1}_{\{k \in S_i^B\}}}{\pi^A_i \pi^{AB}_{(i,k)}}\nonumber \\ 
&= \sum_{i \in U^A_k} \frac{\theta^{AB}_{ik} \uvec{1}_{\{i \in S^A\}} \uvec{1}_{\{k \in S_i^B\}}}{\pi^A_i \pi^{AB}_{(i,k)}} \nonumber
\end{align}

\paragraph{} $\hat{t}_{\mathbf{y}^B}$ will be unbiased if, for all units in $U^B$,  $\mathbb{E}(w_k)=\mathbb{E}_{\SS^A}(\mathbb{E}(w_k|\SS^A))=1$. This can be expressed as: $$\smash{\sum_{i \in U^A}\ell_{ik}\theta^{AB}_{ik}=\allowbreak\sum_{i \in U^A_k}\theta^{AB}_{ik}=1}.$$  This condition is significant. It implies that implementing $\smash{\hat{t}_{\mathbf{y}^B}}$ requires knowing, for a unit $k$ in $\SS^{2B}$,  the list of units belonging to $U^A_k$, whether they belong to $\mathbb{S}^A$ or not. With this list, $\mathbf{\Theta}^{\!AB}$ can be chosen so that $\smash{\sum_{i \in U^A}\ell_{ik}\theta^{AB}_{ik}=\sum_{i \in U^A_k}\theta^{AB}_{ik}=1}$. For each selected $k$, the minimal supplementary information needed is the list of units of $U^A_k$ that do not belong to $\SS^A$ (see Figure \ref{Fig2b}). In practice, this supplementary information can be obtained via administrative data or by adding some specific questions to the questionnaire.
\paragraph{}
In the sequel, the unbiasedness constraints will be referred as to:
\begin{equation}\label{H2}H2a\text{ : }(\mathbf{L}^{\!\!AB} \odot \mathbf{\Theta}^{\!AB})^\intercal\uvec{1}_{N^A}=\uvec{1}_{N^B}\Leftrightarrow \forall k \in U^B, \sum_{i\in U^A}\ell_{ik} \theta^{AB}_{ik}=\sum_{i\in U^A_k} \theta^{AB}_{ik}=1.\end{equation} 

Unlike \citet{deville2006indirect} the entries of $\mathbf{\Theta}^{\!AB}$ are not required to be positive. While this condition is indeed helpful to avoid obtaining a negative estimate of a sum of positive values, it is not necessary to ensure that the estimator is unbiased.

\begin{figure}[h!]
\begin{subfigure}[t]{0.47\textwidth}
\centering
\includegraphics[scale=1]{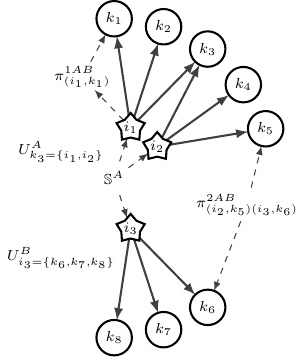}
    \caption{\raggedright Preparing the second-stage}\label{Fig2a}\vfill
    \end{subfigure}
    \begin{subfigure}[t]{0.55\textwidth}
\centering
\includegraphics[scale=1]{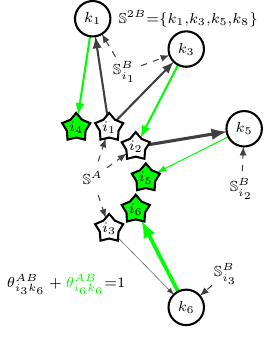}
    \caption{\raggedright Drawing $\mathbb{S}^{2B}$, deducing the supplementary information needed to satisfy (\ref{H2}) (green): $k_6$ was drawn from $i_3$, but could have been drawn from $i_6$,  according to Figure \ref{Fig1}.}\vfill\label{Fig2b}
    \end{subfigure}
    \caption{Second stage of a two-stage indirect sampling}\label{Fig3}
\end{figure}

\subsection[Computing the variance of t-hat]{Computing the variance of $\hat{t}_{\mathbf{y}^B}$}

The variance of $\hat{t}_{\mathbf{y}^B}$ can be expressed in a closed-form formula, with the proof provided in Appendix  (Section \ref{Proofvariance}).

\begin{proposition}[Variance of $\hat{t}_{\mathbf{y}^B}$]\label{variance}
The variance of the estimator $\hat{t}_{\mathbf{y}^B}$ is given by:
\[
\operatorname{var}(\hat{t}_{\mathbf{y}^B}) = 
\mathbf{v}_{\mathbf{\Theta}^{\!AB}}^\intercal
\mathbf{\mathcal{Q}}(\mathbf{L}^{\!\!AB}, \mathbf{y}^B {\mathbf{y}^B}^\intercal, \tilde{\mathbf{\Delta}}^{A}, \tilde{\mathbf{\Delta}}^{AB})
\mathbf{v}_{\mathbf{\Theta}^{\!AB}},
\]
where the symmetric matrix $\mathbf{\mathcal{Q}}(\cdot) \in \mathbb{R}^{N^A N^B \times N^A N^B}$ is defined as:
\[
\mathbf{\mathcal{Q}}(\mathbf{L}^{\!\!AB}, \mathbf{y}^B {\mathbf{y}^B}^\intercal, \tilde{\mathbf{\Delta}}^{A}, \tilde{\mathbf{\Delta}}^{AB}) = 
\mathbf{D}_{\mathbf{v}_{\mathbf{L}^{\!\!AB}}}
\left[
\left(\mathbf{y}^B {\mathbf{y}^B}^\intercal \otimes\mathbf{J}_{N^A} \right) \odot 
\left(
\mathbf{J}_{N^B} \otimes \tilde{\mathbf{\Delta}}^{A}
+ \tilde{\mathbf{\Delta}}^{AB}
+ \left(\mathbf{J}_{N^B} \otimes \tilde{\mathbf{\Delta}}^{A} \right) \odot \tilde{\mathbf{\Delta}}^{AB}
\right)
\right]
\mathbf{D}_{\mathbf{v}_{\mathbf{L}^{\!\!AB}}},
\]
with:
- $\tilde{\mathbf{\Delta}}^{A} \in \mathbb{R}^{N^A \times N^A}$ defined by
\[
\tilde{\Delta}^{A}_{ij} = 
\frac{\pi^{A}_{ij} - \pi^{A}_i \pi^{A}_j}{\pi^{A}_i \pi^{A}_j},
\]
- $\tilde{\mathbf{\Delta}}^{AB} \in \mathbb{R}^{N^A N^B \times N^A N^B}$ defined by
\[
\tilde{\Delta}^{AB}_{(i,k),(j,l)} = 
\begin{cases}
\displaystyle\frac{\pi^{AB}_{(i,k),(j,l)} - \pi^{AB}_{(i,k)}\pi^{AB}_{(j,l)}}{\pi^{AB}_{(i,k)}\pi^{AB}_{(j,l)}} & \text{if } \pi^{AB}_{(i,k)} > 0,\, \pi^{AB}_{(j,l)} > 0, \\
0 & \text{otherwise.}
\end{cases}
\]
\end{proposition}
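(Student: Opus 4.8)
The plan is to linearise $\hat t_{\mathbf{y}^B}$ as a double sum of indicator products, compute the covariance of any two such products using the two-stage independence assumption $(\ref{H1})$, pull out the inclusion probabilities so that the relative covariances $\tilde{\mathbf{\Delta}}^A$, $\tilde{\mathbf{\Delta}}^{AB}$ appear, and finally recognise the resulting quadruple sum as the announced quadratic form; a route through the law of total variance conditionally on $\SS^A$ is also possible but leads to the same bookkeeping.

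First I would set $Z_{ik}=\uvec{1}_{\{i\in\SS^A\}}\uvec{1}_{\{k\in\SS^B_i\}}$ and $a_{ik}=\ell_{ik}\theta^{AB}_{ik}y^B_k/(\pi^A_i\pi^{AB}_{(i,k)})$, so that by $(\ref{Estimator})$, restricting as there to pairs $(i,k)$ with $\ell_{ik}=1$ — for which $\pi^{AB}_{(i,k)}>0$, so $a_{ik}$ is well defined — one has $\hat t_{\mathbf{y}^B}=\sum_{i\in U^A}\sum_{k\in U^B_i}a_{ik}Z_{ik}$ and hence $\operatorname{var}(\hat t_{\mathbf{y}^B})=\sum_{i,k}\sum_{j,l}a_{ik}a_{jl}\operatorname{cov}(Z_{ik},Z_{jl})$. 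Next, invoking $(\ref{H1})$ — read, as is standard for two-stage designs, as joint independence of the second-stage family $(\SS^B_i)_i$ from $\SS^A$ — one gets $\EE(Z_{ik})=\pi^A_i\pi^{AB}_{(i,k)}$ and, distinguishing $i=j$ from $i\neq j$, $\EE(Z_{ik}Z_{jl})=\pi^A_{ij}\,\pi^{AB}_{(i,k),(j,l)}$ under the conventions $\pi^A_{ii}:=\pi^A_i$ and $\pi^{AB}_{(i,k),(i,l)}:=\Pr(\{k,l\}\subseteq\SS^B_i)$. Therefore $\operatorname{cov}(Z_{ik},Z_{jl})=\pi^A_{ij}\pi^{AB}_{(i,k),(j,l)}-\pi^A_i\pi^A_j\pi^{AB}_{(i,k)}\pi^{AB}_{(j,l)}$, and pulling out $\pi^A_i\pi^A_j\pi^{AB}_{(i,k)}\pi^{AB}_{(j,l)}$ leaves, by the definitions of $\tilde{\mathbf{\Delta}}^A$ and $\tilde{\mathbf{\Delta}}^{AB}$, the bracket $(1+\tilde\Delta^A_{ij})(1+\tilde\Delta^{AB}_{(i,k),(j,l)})-1=\tilde\Delta^A_{ij}+\tilde\Delta^{AB}_{(i,k),(j,l)}+\tilde\Delta^A_{ij}\tilde\Delta^{AB}_{(i,k),(j,l)}$. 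Substituting back, this factor cancels the denominators of $a_{ik}a_{jl}$, yielding
\[
\operatorname{var}(\hat t_{\mathbf{y}^B})=\sum_{i,k}\sum_{j,l}\ell_{ik}\theta^{AB}_{ik}y^B_k\,\ell_{jl}\theta^{AB}_{jl}y^B_l\bigl(\tilde\Delta^A_{ij}+\tilde\Delta^{AB}_{(i,k),(j,l)}+\tilde\Delta^A_{ij}\tilde\Delta^{AB}_{(i,k),(j,l)}\bigr).
\]

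It then remains to identify this quadruple sum with $\mathbf{v}_{\mathbf{\Theta}^{\!AB}}^\intercal\mathbf{\mathcal{Q}}\,\mathbf{v}_{\mathbf{\Theta}^{\!AB}}$. With the $(N^AN^B)$-vectorisation convention of Section~\ref{implement} (block index of size $N^B$ outermost, offset of size $N^A$ innermost), the $\bigl((k,i),(l,j)\bigr)$ entry of $\mathbf{y}^B{\mathbf{y}^B}^\intercal\otimes\mathbf{J}_{N^A}$ is $y^B_ky^B_l$, that of $\mathbf{J}_{N^B}\otimes\tilde{\mathbf{\Delta}}^A$ is $\tilde\Delta^A_{ij}$, and that of $\tilde{\mathbf{\Delta}}^{AB}$ is $\tilde\Delta^{AB}_{(i,k),(j,l)}$; so the corresponding entry of the Hadamard expression inside $\mathbf{\mathcal{Q}}$ is $y^B_ky^B_l\bigl(\tilde\Delta^A_{ij}+\tilde\Delta^{AB}_{(i,k),(j,l)}+\tilde\Delta^A_{ij}\tilde\Delta^{AB}_{(i,k),(j,l)}\bigr)$, and conjugating by $\mathbf{D}_{\mathbf{v}_{\mathbf{L}^{\!\!AB}}}$ multiplies it by $\ell_{ik}\ell_{jl}$. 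Since the $(k,i)$ coordinate of $\mathbf{v}_{\mathbf{\Theta}^{\!AB}}$ is $\theta^{AB}_{ik}$, expanding the quadratic form reproduces exactly the displayed double sum; symmetry of $\mathbf{\mathcal{Q}}$ is inherited from that of $\tilde{\mathbf{\Delta}}^A$, $\tilde{\mathbf{\Delta}}^{AB}$ and $\mathbf{y}^B{\mathbf{y}^B}^\intercal$.

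The main obstacle is the covariance computation: one must use $(\ref{H1})$ in the strong form "the whole second-stage family $(\SS^B_i)_i$ is independent of $\SS^A$" (its literal statement gives only $\SS^B_i\perp\SS^A$ for each fixed $i$, which is not enough for the $i\neq j$ cross terms), and one must organise the diagonal contributions ($i=j$) so that the single pair of conventions $\pi^A_{ii}=\pi^A_i$, $\pi^{AB}_{(i,k),(i,l)}=\Pr(\{k,l\}\subseteq\SS^B_i)$ makes the $i=j$ and $i\neq j$ cases collapse into one formula and makes the product identity $(1+\tilde\Delta^A)(1+\tilde\Delta^{AB})-1=\tilde\Delta^A+\tilde\Delta^{AB}+\tilde\Delta^A\tilde\Delta^{AB}$ hold uniformly. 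Everything after that — the cancellation of the inclusion-probability factors and the passage to Kronecker/Hadamard notation — is routine once the vectorisation convention is pinned down.
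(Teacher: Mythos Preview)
Your proof is correct and follows essentially the same route as the paper's: expand the estimator as a quadruple sum over $(i,k,j,l)$, use the two-stage independence to factor $\mathbb{E}(Z_{ik}Z_{jl})=\pi^A_{ij}\pi^{AB}_{(i,k),(j,l)}$, and then recognise the resulting sum as the stated Kronecker/Hadamard quadratic form. The only cosmetic difference is that the paper computes $\mathbb{E}(\hat t_{\mathbf{y}^B}^{\,2})$ first and subtracts $(t_{\mathbf{y}^B})^2$ at the end (substituting $D^{-1}_{\Pi^A}\Pi^{2A}D^{-1}_{\Pi^A}=\tilde{\mathbf{\Delta}}^A+\mathbf{J}_{N^A}$ and similarly for $AB$), whereas you work directly with $\operatorname{cov}(Z_{ik},Z_{jl})$ and the identity $(1+\tilde\Delta^A)(1+\tilde\Delta^{AB})-1$; your observation that $(\ref{H1})$ must be read in its strong ``joint'' form is well taken.
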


In Proposition \ref{variance}, $\operatorname{var}(\hat{t}_{\mathbf{y}^B})$ consists of the sum of three terms corresponding to the  variability of $\hat{t}_{\mathbf{y}^B}$ due to $\mathcal{P}^A$, to $\smash{\mathcal{P}^{AB}}$ and to the interaction of both sampling designs. In the one-stage case, this corresponds to  all the $\mathcal{P}^{B}_i$ being censuses, one has $\smash{\tilde{\mathbf{\Delta}}^{AB}=0_{(N^AN^B\times N^AN^B)}}$ and:

\begin{equation}\smash{\operatorname{var}(\hat{t}_{\mathbf{y}^B})=\mathbf{v}_{\mathbf{\Theta}^{\!AB}}^\intercal \mathbf{D}_{\mathbf{v}_{\mathbf{L}^{\!\!AB}}}\left[ ((\mathbf{y}^B{\mathbf{y}^B}^\intercal)\otimes\mathbf{J}_{N^A})\odot (\mathbf{J}_{N^B}\otimes \tilde{\mathbf{\Delta}}^A)\right ]\mathbf{D}_{\mathbf{v}_{\mathbf{L}^{\!\!AB}}}\mathbf{v}_{\mathbf{\Theta}^{\!AB}}}. \end{equation} This form is equivalent to the classical one-stage formula given by \citet{deville2006indirect}: 
\begin{equation}\label{VarGSWMa}
\operatorname{var}(\hat{t}_{\mathbf{y}^B})=\left( \mathbf{y}^B \right)^{\intercal} {(\mathbf{L}^{\!\!AB}\odot\mathbf{\Theta}^{\!AB})}^\intercal \tilde{\mathbf{\Delta}}^A(\mathbf{L}^{\!\!AB}\odot\mathbf{\Theta}^{\!AB})\mathbf{y}^B. 
\end{equation}

\paragraph{}
The main advantage of Proposition \ref{variance} is that $\operatorname{var}(\hat{t}_{\mathbf{y}^B})$ naturally takes the form of a quadratic function of $\mathbf{v}_{\mathbf{\Theta}^{\!AB}}$. This advantage also holds when looking at a weighted sum of variances for a set of $Q$ auxiliary target variables $\mathbf{X}^B = (\mathbf{x}^B_1, \dots, \mathbf{x}^B_Q)$: $\sum_{q=1}^Q \alpha^B_q \text{var}(\hat{t}_{\mathbf{x}_q^B})$, where the weights ${\alpha^B_1, \dots, \alpha^B_Q}$ are positive values chosen by the statistician to reflect the relative importance of each variable. In this case, Proposition \ref{variance} directly provides:

\begin{equation} \smash{\sum_{q=1}^Q \alpha^B_q \text{var}(\hat{t}_{\mathbf{x}_q^B}) = \mathbf{v}_{\mathbf{\Theta}^{\!AB}}^\intercal \mathbf{\mathcal{Q}}(\mathbf{L}^{\!\!AB}, \sum_{q=1}^Q \alpha^B_q(\mathbf{x}^B_q {\mathbf{x}_q^B}^\intercal), \tilde{\mathbf{\Delta}}^A, \tilde{\mathbf{\Delta}}^{AB}) \mathbf{v}_{\mathbf{\Theta}^{\!AB}}}. \label{sumvar}\end{equation}
\paragraph{}
This formulation will simplify the derivation of a closed-form solution for the optimal weight matrix in the next section. 
\paragraph{}
However, a notable drawback is that implementing Proposition \ref{variance} requires handling matrices of size $(N^A N^B \times N^A N^B)$, which can be extremely large. In practice, these matrices have numerous rows, and the corresponding columns, filled entirely with zeros. As a result, by removing these lines and columns, they can be reduced to matrices whose size is determined by the number of non-zero elements in $\mathbf{L}^{\!\!AB}$ (see Section \ref{implement}). Therefore, a key objective, whenever possible, is to minimize the number of links between $U^A$ and $U^B$. In the application study, we will ultimately work with $N^A=250$  and $N^B=337$, while the final size of the matrices is \textit{only} $(631\times 631)$.

\subsection{Optimal weight matrix}

In this section $\mathcal{P}^A$ and $\mathcal{P}^{AB}$ are fixed, and consequently, so are $\tilde{\mathbf{\Delta}}^A$ and $\tilde{\mathbf{\Delta}}^{AB}$. The objective is to minimise (\ref{sumvar}), over the set of weight matrices $\mathbf{\Theta}^{\!AB}$ subject to (\ref{H2}). This Equality constraint (\ref{H2}) can also be expressed as a function of $\mathbf{v}_{\mathbf{\Theta}^{\!AB}}$: \begin{equation}\label{H1_nouveau}
    H2b \text{ : } \mathbf{E}^{AB}\mathbf{v}_{\mathbf{\Theta}^{\!AB}}=\uvec{1}_{N^B},
\end{equation} where $\mathbf{E}^{AB}$ is a $(N^B \times N^AN^B)$ bloc-diagonal matrix whose $k^{th}$ bloc is the transpose of the $k^{th}$ column of $\mathbf{L}^{\!\!AB}$ (see Section \ref{implement} for a practical implementation). Solving our minimisation problem becomes a relatively standard task of quadratic programming with equality constraints over $\mathbf{v}_{\mathbf{\Theta}^{\!AB}}$\citep{boyd2004convex}, leading to a closed-form formula for the optimal vector $\mathbf{v}_{\mathbf{\Theta}^{\!AB}_{opt}}$ and consequently for the optimal weight matrix $\mathbf{\Theta}^{\!AB}_{opt}$.

\begin{proposition}[Optimal weight matrix for a set $\mathbf{X}^B$ of auxiliary variables]\label{PropOpt}
Let $\mathbf{L}^{\!\!AB}$, $\mathcal{P}^A$, $\mathcal{P}^{AB}$, $\mathbf{X}^B = \{\mathbf{x}_1^B, \dots, \mathbf{x}_Q^B\}$ and weights $\alpha^B_1,\dots,\alpha^B_Q$ be given. The vector $\mathbf{v}_{\mathbf{\Theta}^{\!AB}_{opt}(\mathbf{X}^B)} \in \mathbb{R}^{N^AN^B}$ minimizing $\sum_{q=1}^Q \alpha^{B}_q \operatorname{var}(\hat{t}_{\mathbf{x}_q^B})$ under constraint~\eqref{H1_nouveau} is given by:
\[
\mathbf{v}_{\mathbf{\Theta}^{\!AB}_{opt}(\mathbf{X}^B)} = 
\left[
\begin{pmatrix}
\mathbf{\mathcal{Q}}(\mathbf{L}^{\!\!AB}, \sum_{q=1}^{Q} \alpha_q^B \mathbf{x}_q^B {\mathbf{x}_q^B}^\intercal, \tilde{\mathbf{\Delta}}^A,\tilde{\mathbf{\Delta}}^{AB}) & (\mathbf{E}^{AB})^\intercal \\
\mathbf{E}^{AB} & 0_{N^B \times N^B}
\end{pmatrix}^\dagger
\begin{pmatrix}
0_{N^AN^B} \\
\uvec{1}_{N^B}
\end{pmatrix}
\right]_{|\{1,\dots,N^AN^B\}}
+ \ker(\mathbf{\mathcal{Q}}(\cdot)).
\]
where $\dagger$ stands for the Moore-Penrose inverse.
\end{proposition}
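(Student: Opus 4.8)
The plan is to read Proposition~\ref{PropOpt} as the solution of a convex quadratic program with a single affine equality constraint, and to extract it from the associated (singular) Karush--Kuhn--Tucker system via the Moore--Penrose inverse. First I would record that the matrix $\mathbf{\mathcal{Q}}(\cdot)$ in the statement is symmetric positive semidefinite: by Proposition~\ref{variance} and~\eqref{sumvar}, for every $\mathbf{v}\in\RR^{N^AN^B}$ the scalar $\mathbf{v}^\intercal\mathbf{\mathcal{Q}}(\cdot)\mathbf{v}$ equals $\sum_{q=1}^{Q}\alpha^B_q\operatorname{var}(\hat t_{\mathbf{x}_q^B})$ evaluated at the weight matrix whose vectorisation is $\mathbf{v}$, hence is nonnegative because the $\alpha_q^B$ are. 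The objective $\mathbf{v}\mapsto\mathbf{v}^\intercal\mathbf{\mathcal{Q}}(\cdot)\mathbf{v}$ is thus convex and the constraint~\eqref{H1_nouveau} affine, so a feasible $\mathbf{v}$ is a global minimiser if and only if the KKT conditions hold: there exists $\lambda\in\RR^{N^B}$ with $\mathbf{\mathcal{Q}}(\cdot)\mathbf{v}+(\mathbf{E}^{AB})^\intercal\lambda=0_{N^AN^B}$ and $\mathbf{E}^{AB}\mathbf{v}=\uvec{1}_{N^B}$. The first identity is the vanishing of the gradient of the Lagrangian $\mathbf{v}^\intercal\mathbf{\mathcal{Q}}(\cdot)\mathbf{v}-2\lambda^\intercal(\mathbf{E}^{AB}\mathbf{v}-\uvec{1}_{N^B})$ once the factor $2$ is cancelled, which is exactly why $\mathbf{\mathcal{Q}}(\cdot)$, and not $2\mathbf{\mathcal{Q}}(\cdot)$, occupies the top-left block. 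Stacking the two identities gives the square symmetric saddle-point system $\mathbf{M}\mathbf{z}=\mathbf{b}$ with $\mathbf{z}=(\mathbf{v}^\intercal,\lambda^\intercal)^\intercal$, $\mathbf{b}=(0_{N^AN^B}^\intercal,\uvec{1}_{N^B}^\intercal)^\intercal$, and $\mathbf{M}$ the block matrix displayed in the statement.

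Next I would determine $\ker\mathbf{M}$. If $(\mathbf{v},\lambda)\in\ker\mathbf{M}$, the second block row gives $\mathbf{E}^{AB}\mathbf{v}=0$, and left-multiplying $\mathbf{\mathcal{Q}}(\cdot)\mathbf{v}+(\mathbf{E}^{AB})^\intercal\lambda=0$ by $\mathbf{v}^\intercal$ yields $\mathbf{v}^\intercal\mathbf{\mathcal{Q}}(\cdot)\mathbf{v}=0$; positive semidefiniteness forces $\mathbf{\mathcal{Q}}(\cdot)\mathbf{v}=0$, hence $\mathbf{v}\in\ker\mathbf{\mathcal{Q}}(\cdot)\cap\ker\mathbf{E}^{AB}$ and $(\mathbf{E}^{AB})^\intercal\lambda=0$. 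The converse being immediate, $\ker\mathbf{M}=(\ker\mathbf{\mathcal{Q}}(\cdot)\cap\ker\mathbf{E}^{AB})\times\ker((\mathbf{E}^{AB})^\intercal)$. Since $\mathbf{M}$ is symmetric, $\mathbf{M}\mathbf{z}=\mathbf{b}$ is consistent iff $\mathbf{b}\perp\ker\mathbf{M}$, i.e.\ iff $\uvec{1}_{N^B}^\intercal\lambda=0$ for every $\lambda\in\ker((\mathbf{E}^{AB})^\intercal)$, i.e.\ iff $\uvec{1}_{N^B}\in\operatorname{range}(\mathbf{E}^{AB})$, i.e.\ iff the feasible set of~\eqref{H1_nouveau} is nonempty. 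It is nonempty: the uniform GWSM weights $\theta^{AB}_{ik}=\ell^{AB}_{ik}/\abs{U^A_k}$ satisfy~\eqref{H2}.

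Consequently $\mathbf{M}^\dagger\mathbf{b}$ solves $\mathbf{M}\mathbf{z}=\mathbf{b}$, and the full solution set is $\mathbf{M}^\dagger\mathbf{b}+\ker\mathbf{M}$. Reading off the two block rows, the first block $\mathbf{v}$ of any solution $\mathbf{z}$ is feasible for~\eqref{H1_nouveau} and stationary with multiplier $\lambda$, hence a global minimiser by convexity; conversely any minimiser admits such a multiplier (the constraint being affine, a constraint qualification is automatic) and so its $\mathbf{z}$ lies in $\mathbf{M}^\dagger\mathbf{b}+\ker\mathbf{M}$. Projecting onto the first $N^AN^B$ coordinates then gives precisely the announced $\mathbf{v}_{\mathbf{\Theta}^{\!AB}_{opt}(\mathbf{X}^B)}=[\mathbf{M}^\dagger\mathbf{b}]_{|\{1,\dots,N^AN^B\}}$ plus the first-block part of $\ker\mathbf{M}$, which by the previous step is $\ker\mathbf{\mathcal{Q}}(\cdot)\cap\ker\mathbf{E}^{AB}$. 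This is what the statement abbreviates as ``$+\ker(\mathbf{\mathcal{Q}}(\cdot))$'': any coordinate $(i,k)$ with $\ell^{AB}_{ik}=0$ already lies in $\ker\mathbf{E}^{AB}$ (the $k$-th block of $\mathbf{E}^{AB}$ is the transpose of the $k$-th column of $\mathbf{L}^{\!\!AB}$) and affects neither feasibility nor the estimator $\hat t_{\mathbf{y}^B}$, so the genuinely constrained freedom is the stated intersection.

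I expect the only real obstacle to be the bookkeeping forced by the singularity of $\mathbf{\mathcal{Q}}(\cdot)$: one cannot simply invert a bordered positive-\emph{definite} block, so one has to argue that consistency of the \emph{indefinite} saddle system $\mathbf{M}\mathbf{z}=\mathbf{b}$ together with KKT sufficiency makes the first block of $\mathbf{M}^\dagger\mathbf{b}$ a true minimiser of the constrained problem, and to identify the correct additive kernel term rather than $\ker\mathbf{M}$ itself. Everything else --- positive semidefiniteness inherited from Proposition~\ref{variance}, the factor-$2$ cancellation, and the description of $\ker\mathbf{M}$ --- is routine linear algebra.
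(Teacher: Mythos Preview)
Your proof is correct and follows exactly the approach the paper indicates: the text preceding Proposition~\ref{PropOpt} simply states that the minimisation is ``a relatively standard task of quadratic programming with equality constraints'' and cites \citet{boyd2004convex}, without spelling out the KKT system or the Moore--Penrose argument that you have carefully supplied. Your observation that the additive freedom is really $\ker\mathbf{\mathcal{Q}}(\cdot)\cap\ker\mathbf{E}^{AB}$ rather than the full $\ker\mathbf{\mathcal{Q}}(\cdot)$ is a genuine sharpening of the statement that the paper leaves implicit.
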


A particular scenario arises when \(Q = N^B\) and \(\mathbf{x}_q^B = e_q\), the \(q^{th}\) canonical vector. In this case the matrix \(\mathbf{X}^B\) reduces to the identity matrix \(\displaystyle \mathbf{I}_{N^B}\), and \(\hat{t}_{\mathbf{x}_q^B} = w_q\). The optimization problem then consists of minimizing:
\[
\sum_{q=1}^Q \alpha_q^B \operatorname{var}(w_q).
\] 
Since each \(w_q\) depends on a distinct set of parameters \(\{\theta^{AB}_{1q}, \ldots, \theta^{AB}_{N^A q}\}\), the problem is equivalent to independently minimizing the variance of each random GWSM weight \(w_q\). This solution generalizes the (one-stage) \emph{weak} optimality studied by \citet{deville2006indirect}. Moreover, combining this \emph{weak} optimality with our approach for a set \(\mathbf{X}^B\) of auxiliary variables is made possible by considering the augmented set \((\mathbf{X}^B : \mathbf{I}_{N^B})\) in Proposition \ref{PropOpt}.

\subsection{Optimal inclusion probabilities at the second stage}
After optimizing \(\displaystyle \textstyle\sum_{q=1}^Q \alpha^B_q \operatorname{var}(\hat{t}_{\mathbf{x}_q^B})\) over \(\mathbf{\Theta}^{\!AB}\) this section investigates the optimization of this quantity over \(\tilde{\mathbf{\Delta}}^{AB}\) for a simplified problem, by incorporating the following assumptions in addition to (\ref{H1}): 
\begin{equation}\label{H3-4}
\left\{
\begin{array}{ll}
H3: & \forall (i,j) \in U^A \times U^A,\, i \neq j,\quad \mathbb{S}^B_i \text{ is independent from } \mathbb{S}^B_j, \\
H4: & \forall i \in U^A, \quad \mathbb{S}^B_i \text{ is of fixed size } 1.
\end{array}
\right.
\end{equation}

As a consequence, \(\tilde{\mathbf{\Delta}}^{AB}\) depends solely on \(\mathbf{\Pi}^{1AB}\), since we have:
\[
\begin{cases}
\pi^{AB}_{(i,k)(i,l)} = 0, & \text{for all } i \text{ and all } k \neq l, \\
\pi^{AB}_{(i,k)(j,l)} = \pi^{AB}_{(i,k)} \pi^{AB}_{(j,l)}, & \text{for all } i \neq j.
\end{cases}
\]

Therefore, minimizing \(\displaystyle \textstyle\sum_{q=1}^Q \alpha^B_q \operatorname{var}(\hat{t}_{\mathbf{x}_q^B})\) over \(\tilde{\mathbf{\Delta}}^{AB}\) is equivalent to minimizing it over \(\mathbf{\Pi}^{1AB}\). From this, we derive the optimal second-stage inclusion probabilities. The proof is provided in Appendix (Section \ref{ProofPiOpt}).

\begin{proposition}[Optimal second-stage inclusion probabilities]\label{PiOpt}
Let $\mathbf{L}^{\!\!AB}$ be a link matrix, $\mathcal{P}^A$ an intermediate sampling design for population $U^A$, and let $\mathbf{\Theta}^{\!AB}$ be a weight matrix satisfying constraint~\eqref{H2}. Let $\mathcal{P}^{AB}$ be a sampling design for $U^A\times U^B$ satisfying constraints~\eqref{H3-4}. Let $\mathbf{X}^B = \{\mathbf{x}_1^B, \dots, \mathbf{x}_Q^B\}$ be a set of $Q$ target auxiliary variables on $U^B$, with associated non-negative weights $\alpha_1^B, \dots, \alpha_Q^B$.

Then the optimal second-stage first-order inclusion probabilities $\pi^{AB,opt}_{(i,k)}$ minimizing the weighted total variance $
\sum_{q=1}^Q \alpha^B_q \operatorname{var}(\hat{t}_{\mathbf{x}_q^B}),
$
are given, for each $(i,k)$ with $k \in U_i^B$, by:
\[
\pi^{AB,opt}_{(i,k)} =
\frac{
|\theta^{AB}_{ik}|\sqrt{ \sum_{q=1}^Q \alpha_q^B \left(x_{q,k}^B\right)^2 }
}{
\sum\limits_{l \in U^B_i} |\theta^{AB}_{il}| \sqrt{ \sum_{q=1}^Q \alpha_q^B \left(x_{q,l}^B\right)^2 }
}.
\]

\end{proposition}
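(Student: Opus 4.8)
Under the simplifying assumptions $H3$ and $H4$, the matrix $\tilde{\mathbf{\Delta}}^{AB}$ collapses enormously: the only non-zero entries are the diagonal ones, $\tilde{\Delta}^{AB}_{(i,k),(i,k)} = (1-\pi^{AB}_{(i,k)})/\pi^{AB}_{(i,k)} = 1/\pi^{AB}_{(i,k)} - 1$. Substituting this into the variance formula of Proposition~\ref{variance}, the first term ($\mathbf{J}_{N^B}\otimes\tilde{\mathbf{\Delta}}^A$) does not depend on $\mathbf{\Pi}^{1AB}$ and is therefore a constant for this optimization; only the second and third terms matter. I would carefully track how the Hadamard products act: on the diagonal, $\mathbf{J}_{N^B}\otimes\tilde{\mathbf{\Delta}}^A$ contributes $\tilde{\Delta}^A_{ii}=0$ (since $\pi^A_{ii}=\pi^A_i$), so the cross term $(\mathbf{J}_{N^B}\otimes\tilde{\mathbf{\Delta}}^A)\odot\tilde{\mathbf{\Delta}}^{AB}$ vanishes on the diagonal too. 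Hence the $\mathbf{\Pi}^{1AB}$-dependent part of $\sum_q \alpha_q^B\operatorname{var}(\hat t_{\mathbf x_q^B})$ reduces to a diagonal quadratic form in $\mathbf{v}_{\mathbf\Theta^{\!AB}}$ weighted by $\mathbf{v}_{\mathbf L^{\!\!AB}}$ and by $\sum_q\alpha_q^B (x^B_{q,k})^2$, namely a sum over linked pairs $(i,k)$ of $\ell^{AB}_{ik}(\theta^{AB}_{ik})^2\big(\sum_q\alpha_q^B (x^B_{q,k})^2\big)\big(1/\pi^{AB}_{(i,k)}-1\big)$.

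\textbf{The optimization.} So the problem decouples: minimize $\sum_{(i,k):\,\ell^{AB}_{ik}=1} c_{ik}/\pi^{AB}_{(i,k)}$ where $c_{ik}=\ell^{AB}_{ik}(\theta^{AB}_{ik})^2\sum_q\alpha_q^B (x^B_{q,k})^2 \ge 0$, subject to the fixed-size-one constraint $\sum_{k\in U^B_i}\pi^{AB}_{(i,k)}=1$ for each $i\in U^A$ (this is the content of $H4$, and it separates the problem across $i$). For each $i$ this is the classic Cauchy--Schwarz / Lagrange-multiplier minimization: by Cauchy--Schwarz, $\big(\sum_{k}\sqrt{c_{ik}}\big)^2 = \big(\sum_k \sqrt{c_{ik}/\pi^{AB}_{(i,k)}}\cdot\sqrt{\pi^{AB}_{(i,k)}}\big)^2 \le \big(\sum_k c_{ik}/\pi^{AB}_{(i,k)}\big)\big(\sum_k \pi^{AB}_{(i,k)}\big)$, with equality iff $\pi^{AB}_{(i,k)}\propto\sqrt{c_{ik}}$. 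Normalizing by the constraint gives $\pi^{AB,opt}_{(i,k)} = \sqrt{c_{ik}}/\sum_{l\in U^B_i}\sqrt{c_{il}} = |\theta^{AB}_{ik}|\sqrt{\sum_q\alpha_q^B (x^B_{q,k})^2}\big/\sum_{l\in U^B_i}|\theta^{AB}_{il}|\sqrt{\sum_q\alpha_q^B (x^B_{q,l})^2}$, which is exactly the claimed formula. (I would note the mild nondegeneracy assumption that the denominator is positive, i.e.\ not all linked $\theta^{AB}_{il}$ or target values vanish; otherwise any feasible $\pi$ is optimal with value $0$.)

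\textbf{Main obstacle.} The only genuinely delicate step is the bookkeeping in the first paragraph: correctly expanding $\mathbf{\mathcal Q}(\cdot)$ under $H3$--$H4$ and confirming that (a) the cross term drops out entirely because $\tilde{\mathbf\Delta}^A$ has zero diagonal, and (b) what survives is genuinely diagonal in the $(i,k)$ indexing, so that the objective is a plain separable sum with no interaction between distinct pairs — this is what makes the per-$i$ Cauchy--Schwarz argument legitimate. Once that reduction is in hand, the optimization itself is a one-line application of Cauchy--Schwarz (or equivalently a KKT computation, differentiating $\sum_k c_{ik}/\pi_{ik} - \lambda_i(\sum_k \pi_{ik}-1)$ and solving $-c_{ik}/\pi_{ik}^2 = -\lambda_i$). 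I would present the reduction in the appendix with the Kronecker/Hadamard indices written out explicitly, then finish with the Cauchy--Schwarz equality case.
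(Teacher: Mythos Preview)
Your high-level route matches the paper's: reduce the variance under $H3$--$H4$ to a separable sum, one block per $i\in U^A$, and minimize each block under the constraint $\sum_{k\in U^B_i}\pi^{AB}_{(i,k)}=1$ (the paper phrases this last step as a Lagrangian; your Cauchy--Schwarz formulation is equivalent). The final formula is correct.

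However, two of your bookkeeping claims are false, and since you flagged this reduction as the ``main obstacle'' it is worth correcting them. First, $\tilde{\mathbf\Delta}^{AB}$ is \emph{not} diagonal under $H3$--$H4$: for $i=j$ and $k\neq l$ with $k,l\in U^B_i$ one has $\pi^{AB}_{(i,k)(i,l)}=0$, so $\tilde{\Delta}^{AB}_{(i,k),(i,l)}=-1$. These entries are constants in $\mathbf\Pi^{1AB}$, so they do not affect the optimization, but your assertion that only the diagonal survives is wrong. Second, $\tilde{\Delta}^A_{ii}=(\pi^A_i-(\pi^A_i)^2)/(\pi^A_i)^2=(1-\pi^A_i)/\pi^A_i\neq 0$, so the cross term $(\mathbf J_{N^B}\otimes\tilde{\mathbf\Delta}^A)\odot\tilde{\mathbf\Delta}^{AB}$ does \emph{not} vanish on the diagonal. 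Adding the diagonal contributions of the second and third terms gives a coefficient $1+(1-\pi^A_i)/\pi^A_i=1/\pi^A_i$ in front of $1/\pi^{AB}_{(i,k)}$, i.e.\ the correct $c_{ik}$ is $(\theta^{AB}_{ik})^2\big(\sum_q\alpha^B_q(x^B_{q,k})^2\big)/\pi^A_i$, exactly what the paper obtains by going back to the raw expression $\sum_{i,k}(\theta^{AB}_{ik})^2(y^B_k)^2/(\pi^A_i\pi^{AB}_{(i,k)})$ from the proof of Proposition~\ref{variance}. Your optimization still yields the stated $\pi^{AB,opt}_{(i,k)}$ only because $1/\pi^A_i$ is constant across $k$ for fixed $i$ and cancels in the ratio; but as written, your derivation of the objective is incorrect.
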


\section{Intermediate and second-stage determinantal sampling designs}
\subsection{Determinantal sampling designs}\label{DSD}
A random sample $\mathbb{S}$ from a finite population $U = \{1, \dots, N\}$, governed by a probability law $\mathcal{P}$, is said to follow a \emph{determinantal sampling design (DSD)} if there exists a Hermitian contracting matrix $\mathbf{\mathbf{K}} \in \mathbb{C}^{N \times N}$, that is, a Hermitian matrix whose eigenvalues lie in $[0,1]$, such that for every subset $s \subseteq U$, the probability that $s$ is included in the sample satisfies:
\[
\mathbb{P}(s \subseteq \mathbb{S}) = \det(K_{|s}).
\]

From this definition we directly obtain the first- and second-order inclusion probabilities:
\begin{equation}
\left\{
\begin{array}{l}
\pi_i = \mathbb{P}(\{i\} \subseteq \mathbb{S}) = \det(K_{|\{i\}}) = K_{ii}, \\[0.8em]
\pi_{ij} = \mathbb{P}(\{i, j\} \subseteq \mathbb{S}) = \det(K_{|\{i,j\}}) = K_{ii}K_{jj} - K_{ij}K_{ji} = K_{jj} - K_{ij}\overline{K_{ij}} = K_{ii}K_{jj} - |K_{ij}|^2,
\end{array}
\right.
\label{InclusionProbabilities}
\end{equation}
where $\overline{z}$ denotes the complex conjugate of $z \in \mathbb{C}$ and $|z|$ its modulus.

\vspace{0.5em}

In the case of determinantal sampling designs, the $(N \times N)$ inclusion covariance matrix $\mathbf{\Delta}$, whose entries are defined as $
\mathbf{\Delta}_{ij} = \pi_{ij} - \pi_i \pi_j,
$ admits the closed-form expression:
\begin{equation}
\mathbf{\Delta} = \mathbf{K} \odot (\mathbf{I}_N - \overline{\mathbf{K}}).
\end{equation}
\paragraph{}
The variance of the Horvitz--Thompson estimator under a determinantal sampling design follows from the classical expression (see \citet{sarndal2003model}):
\begin{equation}\label{Vardet}
\operatorname{var}(\hat{t}_\mathbf{y}^{\mathrm{HT}}) = 
\mathbf{y}^\intercal (\mathbf{I}_N \odot \mathbf{K})^{-1} \left[ \mathbf{K} \odot (\mathbf{I}_N - \overline{\mathbf{K}}) \right] (\mathbf{I}_N \odot \mathbf{K})^{-1}\mathbf{y}.
\end{equation}

\paragraph{}
One of the main challenges in applying determinantal sampling designs lies in constructing kernels that satisfy practical constraints commonly encountered in survey sampling, notably fixed sample size and prescribed first-order inclusion probabilities. \citet{loonis2019determinantal} and \citet{loonis2023} show that the fixed-size requirement can be fulfilled by using orthogonal projection kernels. Furthermore, as implied by Equation~\eqref{InclusionProbabilities}, the specification of inclusion probabilities can be achieved by constructing kernels with a prescribed diagonal, i.e., $K_{ii} = \Pi_i$, where $\mathbf{\Pi} = (\Pi_i)_{i=1}^N$ denotes a given vector of desired first-order inclusion probabilities. The authors provide explicit examples of such kernels, including orthogonal projection matrices, and present an algorithm, relying on \citet{fickus2013constructing}, to construct all Hermitian contracting matrices with a given diagonal. They also show how to construct sampling designs that are optimal for some optimisation criterion. More practically, the authors observe that the simplest determinantal designs can be mobilized for populations of several thousand individuals. The size of the
population that is compatible with more sophisticated constructions, particularly those associated with optimization issues, is several hundred. Although not large, such sizes are commonly encountered when managing stratified sampling designs.
\paragraph{}
We conclude this section with some additional properties that will prove useful in the sequel. 
\begin{itemize}
\item(Poisson sampling) Poisson sampling design (with first order inclusion probabilities $\Pr(\{i\} \subseteq \mathbb{S}) = \Pi_i$ belongs to the family of determinantal sampling designs, with kernel $\mathbf{K}=\mathbf{D}_{\mathbf{\Pi}}$;
\item (Complementary Sample). Let $\mathbb{S} \sim DSD(\mathbf{K})$. The complementary sample $\SS^c$ is a determinantal random sample with kernel $\mathbf{I}_N-\mathbf{K}$;
\item (Domain). Let $DSD(\mathbf{K})$ be a determinantal sampling design on $U$ with kernel $\mathbf{K}$, and let $A\subseteq U$ be a subpopulation (or domain). Then the the restriction $DSD(\mathbf{K})_{|A}$ of $DSD(\mathbf{K})$ to $A$ is a determinantal sampling design with kernel $K_{|A}$, the submatrix of $\mathbf{K}$ whose rows and columns are indexed by $A$: $DSD(\mathbf{K})_{|A}=DSD(K_{|A})$;
\item (Stratification). Let $(U_1,\cdots,U_{H})$ be a partition of $U$ into $H$ strata. The sampling design $DSD(\mathbf{K})$ is stratified iff the matrix admits a block diagonal decomposition relative to these strata, that is $i\in U_h,j\in U_{h'}$, $h \neq h'$ implies $K_{kl}=0$. 
\item (Unitary transformation) Let \( DSD(\mathbf{K}) \) denote a determinantal sampling design with kernel \( \mathbf{K} \), and let \( \boldsymbol{\mathcal{U}} \) be a unitary matrix. Then, the matrix \( \boldsymbol{\mathcal{U}} \mathbf{K} \overline{\boldsymbol{\mathcal{U}}}^\intercal \) also defines a determinantal sampling design. Moreover, if \( \boldsymbol{\mathcal{U}} \) is the permutation matrix associated with a permutation \( \sigma \), then  \( \boldsymbol{\mathcal{U}} \mathbf{K} \overline{\boldsymbol{\mathcal{U}}}^\intercal \) defines, over the permuted population \( \sigma(U) \), the same sampling design as \( \mathbf{K} \) does over \( U \).
\end{itemize}

\subsection{Inclusion probabilities in the target population}
Apart from the case where \(\mathcal{P}^A\) corresponds to Simple Random Sampling or Poisson sampling, the target inclusion probabilities \(\pi_k^{1B} = \Pr(\{k\} \subseteq \mathbb{S}^{1B})\) and \(\pi_k^{2B} = \Pr(\{k\} \subseteq \mathbb{S}^{2B})\) are generally difficult or even impossible to compute in practice \citep{deville2006indirect}. When both \(\mathcal{P}^A\) and \(\mathcal{P}^{AB}\) are determinantal sampling designs, the following proposition provides closed-form expressions not only for these marginal probabilities but also for the joint inclusion probabilities.

\begin{proposition}\label{Probacible}
Let \(U^A\), \(U^B\), \(\mathbf{L}^{\!\!AB}\), \(\mathcal{P}^A\), and \(\mathcal{P}^{AB}\) be given, and let \(\mathbb{S}^{1B}\), \(\mathbb{S}^{2B}\) be the resulting one-stage and two-stage indirect random samples drawn from \(U^B\).

Denote by \(N^A_k = |U^A_k|\), \(N^A_l = |U^A_l|\), and \(N^A_{kl} = |U^A_k \cup U^A_l|\) the cardinalities of the subsets \(U^A_k\), \(U^A_l\), and their union respectively. The pair \((U^A_k, k)\) represents the list of pairs \((i, k)\) with \(i \in U^A_k\). This notation similarly applies to \((U^A_l, l)\), \((U^A_k \cup U^A_l, k)\), and \((U^A_k \cup U^A_l, l)\).

Assume that \(\mathcal{P}^A\) is determinantal with kernel $\mathbf{K}^A$ and \(\mathcal{P}^{AB}\) satisties \ref{H3-4}, then \(\mathcal{P}^{AB}\) is determinantal with kernel $\mathbf{K}^{AB}$, and :
\[
\begin{aligned}
\pi^{2B}_k &= \Pr\left(\{k\} \subseteq \mathbb{S}^{2B}\right) = 1 - \det\left(\mathbf{I}_{N^A_k} - \mathbf{K}^A_{|U^A_k} \, \mathbf{K}^{AB}_{|(U^A_k, k)}\right) \\
\pi^{2B}_{k \ell} &= \Pr(\{k, \ell\} \subseteq \mathbb{S}^{2B}) \\
&= 1 + \det \left( \mathbf{I}_{N^A_{k \ell}} - \mathbf{K}^A_{|U^A_k \cup U^A_\ell} \left( \mathbf{K}^{AB}_{|(U^A_k \cup U^A_\ell, k)} + \mathbf{K}^{AB}_{|(U^A_k \cup U^A_\ell, \ell)} \right) \right) \\
&\quad - \det \left( \mathbf{I}_{N^A_k} - \mathbf{K}^A_{|U^A_k} \mathbf{K}^{AB}_{|(U^A_k, k)} \right) - \det \left( \mathbf{I}_{N^A_\ell} - \mathbf{K}^A_{|U^A_\ell} \mathbf{K}^{AB}_{|(U^A_\ell, \ell)} \right).
\end{aligned}
\]
In the context of one-stage indirect sampling, the first-order and joint inclusion probabilities are given by:
\[
\begin{aligned}
\pi^{1B}_k &= \Pr(\{k\} \subseteq \mathbb{S}^{1B}) = 1 - \det \left( \mathbf{I}_{N^A_k} - \mathbf{K}^A_{|U^A_k} \right), \\
\pi^{1B}_{k\ell} &= \Pr(\{k, \ell\} \subseteq \mathbb{S}^{1B}) \\
&= 1 + \det \left( \mathbf{I}_{N^A_{k\ell}} - \mathbf{K}^A_{|U^A_k \cup U^A_\ell} \right) - \det \left( \mathbf{I}_{N^A_k} - \mathbf{K}^A_{|U^A_k} \right) - \det \left( \mathbf{I}_{N^A_\ell} - \mathbf{K}^A_{|U^A_\ell} \right).
\end{aligned}
\]

\end{proposition}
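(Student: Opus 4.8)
\textbf{Strategy and the fact that $\mathcal{P}^{AB}$ is determinantal.}
The whole argument rests on two facts: that $\mathcal{P}^{AB}$ is itself determinantal, and the ``multiplicative functional'' of a determinantal design; everything else is conditioning on $\mathbb{S}^A$ and inclusion--exclusion. For the first fact, under $H4$ each $\mathcal{P}^B_i$ is a fixed-size-one design on $U^B_i$, hence uniquely determined by its first-order probabilities $(\pi^{AB}_{(i,k)})_{k\in U^B_i}$; one checks that the rank-one kernel $K^B_i$ on $U^B$ with diagonal $\pi^{AB}_{(i,\cdot)}$ (and zero support outside $U^B_i$) reproduces exactly these probabilities and annihilates every sample of size $\ge 2$, so $\mathcal{P}^B_i=DSD(K^B_i)$. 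By $H1$ and $H3$ the samples $\mathbb{S}^B_1,\dots,\mathbb{S}^B_{N^A}$ are mutually independent, so $\mathcal{P}^{AB}$ is their independent product; by the Stratification and Domain properties of Section~\ref{DSD}, that product is precisely $DSD(\mathbf{K}^{AB})$, where $\mathbf{K}^{AB}$ is block-diagonal with $i$-th block $K^B_i$. In particular $\mathbf{K}^{AB}$ has zero off-block entries, its $((i,k),(i,k))$ entry is $\pi^{AB}_{(i,k)}$, and $\pi^{AB}_{(i,k)}=0$ whenever $\ell^{AB}_{ik}=0$; hence each submatrix $\mathbf{K}^{AB}_{|(U^A_k,k)}$, $\mathbf{K}^{AB}_{|(U^A_k\cup U^A_\ell,k)}$, etc.\ occurring in the statement is simply the diagonal matrix carrying the corresponding second-stage probabilities (with zeros off $U^A_k$).

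\textbf{The determinantal multiplicative functional and the marginals.}
For $DSD(\mathbf{K})$ on a finite set $V$ and any $\mathbf{c}=(c_i)_{i\in V}$, expanding $\prod_{i\in V}(1-c_i\mathbf{1}_{\{i\in\mathbb{S}\}})=\sum_{s\subseteq V}(-1)^{|s|}\bigl(\prod_{i\in s}c_i\bigr)\mathbf{1}_{\{s\subseteq\mathbb{S}\}}$, taking expectations, and using $\Pr(s\subseteq\mathbb{S})=\det(\mathbf{K}_{|s})$ together with the cofactor expansion $\det(\mathbf{I}-\mathbf{M})=\sum_s(-1)^{|s|}\det(\mathbf{M}_{|s})$ applied to $\mathbf{M}=\mathbf{D}_{\mathbf{c}}\mathbf{K}$, gives $\mathbb{E}\bigl[\prod_{i\in\mathbb{S}}(1-c_i)\bigr]=\det(\mathbf{I}-\mathbf{D}_{\mathbf{c}}\mathbf{K})$ --- this is the only analytic ingredient. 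Now since $\mathbb{S}^{2B}=\bigcup_{i\in\mathbb{S}^A}\mathbb{S}^B_i$, a unit $k$ misses $\mathbb{S}^{2B}$ iff $k\notin\mathbb{S}^B_i$ for every $i\in U^A_k\cap\mathbb{S}^A$; conditioning on $\mathbb{S}^A$ and invoking $H1$, $H3$ to make the $\mathbb{S}^B_i$ independent of $\mathbb{S}^A$ and of each other, $\Pr(k\notin\mathbb{S}^{2B}\mid\mathbb{S}^A)=\prod_{i\in U^A_k}(1-\pi^{AB}_{(i,k)}\mathbf{1}_{\{i\in\mathbb{S}^A\}})$. Taking $\mathbb{E}_{\mathbb{S}^A}$, using the Domain property ($\mathbb{S}^A\cap U^A_k\sim DSD(\mathbf{K}^A_{|U^A_k})$) and the functional above with $c_i=\pi^{AB}_{(i,k)}$ gives $\Pr(k\notin\mathbb{S}^{2B})=\det(\mathbf{I}_{N^A_k}-\mathbf{K}^{AB}_{|(U^A_k,k)}\mathbf{K}^A_{|U^A_k})$, which equals $\det(\mathbf{I}_{N^A_k}-\mathbf{K}^A_{|U^A_k}\mathbf{K}^{AB}_{|(U^A_k,k)})$ by $\det(\mathbf{I}-\mathbf{A}\mathbf{B})=\det(\mathbf{I}-\mathbf{B}\mathbf{A})$; hence the stated $\pi^{2B}_k$. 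The one-stage $\pi^{1B}_k$ is the degenerate census case: $\Pr(k\notin\mathbb{S}^{1B}\mid\mathbb{S}^A)=\mathbf{1}_{\{U^A_k\subseteq(\mathbb{S}^A)^c\}}$, and the Complementary-sample and Domain properties give $\Pr(k\notin\mathbb{S}^{1B})=\det((\mathbf{I}_{N^A}-\mathbf{K}^A)_{|U^A_k})=\det(\mathbf{I}_{N^A_k}-\mathbf{K}^A_{|U^A_k})$.

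\textbf{The joint probabilities.}
Write $\pi^{2B}_{k\ell}=1-\Pr(k\notin\mathbb{S}^{2B})-\Pr(\ell\notin\mathbb{S}^{2B})+\Pr(k\notin\mathbb{S}^{2B},\,\ell\notin\mathbb{S}^{2B})$; the first two terms come from the previous step. For the joint term, condition on $\mathbb{S}^A$ once more: the event splits into independent constraints indexed by $i\in(U^A_k\cup U^A_\ell)\cap\mathbb{S}^A$, the factor being $1-\pi^{AB}_{(i,k)}$ for $i\in U^A_k\setminus U^A_\ell$, $1-\pi^{AB}_{(i,\ell)}$ for $i\in U^A_\ell\setminus U^A_k$, and --- crucially, using $H4$, so that $\{k\in\mathbb{S}^B_i\}$ and $\{\ell\in\mathbb{S}^B_i\}$ are disjoint --- $1-\pi^{AB}_{(i,k)}-\pi^{AB}_{(i,\ell)}$ for $i\in U^A_k\cap U^A_\ell$. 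Collecting, $\Pr(k,\ell\notin\mathbb{S}^{2B}\mid\mathbb{S}^A)=\prod_{i\in U^A_k\cup U^A_\ell}(1-q_i\mathbf{1}_{\{i\in\mathbb{S}^A\}})$ with $q_i=\pi^{AB}_{(i,k)}\mathbf{1}_{\{i\in U^A_k\}}+\pi^{AB}_{(i,\ell)}\mathbf{1}_{\{i\in U^A_\ell\}}$, so $\mathbf{D}_{\mathbf{q}}=\mathbf{K}^{AB}_{|(U^A_k\cup U^A_\ell,k)}+\mathbf{K}^{AB}_{|(U^A_k\cup U^A_\ell,\ell)}$ by the first paragraph. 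Applying $\mathbb{E}_{\mathbb{S}^A}$, the Domain property on $U^A_k\cup U^A_\ell$, the multiplicative functional, and then $\det(\mathbf{I}-\mathbf{A}\mathbf{B})=\det(\mathbf{I}-\mathbf{B}\mathbf{A})$, the joint term equals $\det\bigl(\mathbf{I}_{N^A_{k\ell}}-\mathbf{K}^A_{|U^A_k\cup U^A_\ell}(\mathbf{K}^{AB}_{|(U^A_k\cup U^A_\ell,k)}+\mathbf{K}^{AB}_{|(U^A_k\cup U^A_\ell,\ell)})\bigr)$; substituting gives the stated $\pi^{2B}_{k\ell}$, and $\pi^{1B}_{k\ell}$ follows from the same computation in the census case (equivalently, inclusion--exclusion on $\Pr(U^A_k\subseteq(\mathbb{S}^A)^c)$, $\Pr(U^A_\ell\subseteq(\mathbb{S}^A)^c)$, $\Pr(U^A_k\cup U^A_\ell\subseteq(\mathbb{S}^A)^c)$ via the Complementary-sample and Domain properties).

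\textbf{Main obstacle.}
The analysis is light; the delicate parts are bookkeeping. First, one must carefully justify that $\mathcal{P}^{AB}$ is determinantal with the block-diagonal kernel $\mathbf{K}^{AB}$ and that the submatrices named in the statement genuinely coincide with the intended diagonal matrices of second-stage probabilities. Second --- and this is where $H4$ is indispensable --- one must correctly evaluate, in the joint case, the conditional factor over the overlap $U^A_k\cap U^A_\ell$: it is the fixed-size-one property that replaces $\Pr(k,\ell\notin\mathbb{S}^B_i)$ by $1-\pi^{AB}_{(i,k)}-\pi^{AB}_{(i,\ell)}$ rather than a product, and this is exactly what makes the union $U^A_k\cup U^A_\ell$ sit inside a single determinant. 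Keeping the indexations of $\mathbf{K}^{AB}$ and of the pair-lists $(U^A_k,k)$, $(U^A_k\cup U^A_\ell,\ell)$ consistent is the step most prone to sign and indexing errors.
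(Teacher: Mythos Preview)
Your argument is correct, but it is organized differently from the paper's. The paper first isolates a structural lemma: if $\mathbb{S}_1\sim DSD(\mathbf{K}^1)$ and $\mathbb{S}_2\sim DSD(\mathbf{K}^2)$ are independent with $\mathbf{K}^2$ diagonal (Poisson), then the entrywise product $\mathbb{S}_1\odot\mathbb{S}_2$ is determinantal with kernel $\mathbf{K}^1\mathbf{K}^2$. With this in hand, $k\in\mathbb{S}^{2B}$ is read as ``$\mathbb{S}^A_{|U^A_k}\odot\mathbb{S}^{AB}_{|(U^A_k,k)}\neq\varnothing$'', and the complementary-sample property gives the determinant directly. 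For the pair, the paper uses the inclusion--exclusion $\pi^{2B}_{k\ell}=\pi^{2B}_k+\pi^{2B}_\ell-\Pr(\{k\}\subseteq\mathbb{S}^{2B}\text{ or }\{\ell\}\subseteq\mathbb{S}^{2B})$ and reduces the ``or'' term to the same template by observing that, under $H3$--$H4$, $\mathbb{S}^{AB}_{|(U^A_k\cup U^A_\ell,k)}+\mathbb{S}^{AB}_{|(U^A_k\cup U^A_\ell,\ell)}$ is again a Poisson sample with kernel the sum $\mathbf{K}^{AB}_{|(U^A_k\cup U^A_\ell,k)}+\mathbf{K}^{AB}_{|(U^A_k\cup U^A_\ell,\ell)}$, so the product lemma applies once more on $U^A_k\cup U^A_\ell$.

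Your route replaces the product lemma by its expectation-level avatar, the multiplicative functional $\mathbb{E}\bigl[\prod_{i\in\mathbb{S}}(1-c_i)\bigr]=\det(\mathbf{I}-\mathbf{D}_{\mathbf{c}}\mathbf{K})$, and uses the complementary inclusion--exclusion $\pi^{2B}_{k\ell}=1-\Pr(k\notin)-\Pr(\ell\notin)+\Pr(k\notin,\ell\notin)$. The two devices are equivalent in content; what your version buys is that the role of $H4$ on the overlap $U^A_k\cap U^A_\ell$ is made completely explicit (the disjointness of $\{k\in\mathbb{S}^B_i\}$ and $\{\ell\in\mathbb{S}^B_i\}$ giving the linear factor $1-\pi^{AB}_{(i,k)}-\pi^{AB}_{(i,\ell)}$), whereas the paper hides this inside the ``sum of two Poissons is Poisson'' step. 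Conversely, the paper's formulation yields a genuine determinantal law for the auxiliary sample $\mathbb{S}^A_{|U^A_k}\odot\mathbb{S}^{AB}_{|(U^A_k,k)}$, which would extend more readily to higher-order target inclusion probabilities than your conditioning argument.
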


The proof is provided in Appendix \ref{ProofProbacible}. From Proposition \ref{Probacible}, we deduce that computing the various first-order inclusion probabilities requires only the knowledge of $U^A_k$ for each unit $k$. Consequently, the target Horvitz–Thompson estimator, defined as $\hat{t}^{HT}_{\mathbf{y}^B} = \sum_{k \in \mathbb{S}^{2B}} y^B_k/\pi^{2B}_k$, can be applied under the same conditions as the GWSM estimators, as outlined in Figure \ref{Fig2b}

\paragraph{}
When \(\mathbf{L}^{\!\!AB}\) is fully known, both \(\pi^{2B}_k\) and \(\pi^{2B}_{kl}\) depend on \(\mathbf{K}^A\). Consequently, it is possible to obtain different sets of target inclusion probabilities by adjusting this parameter. For example, one can identify a matrix \(\mathbf{K}^A\) that, to the extent possible, satisfies \(\pi^{2B}_k(\mathbf{K}^A) \approx \Pi^B_k\), where \(\Pi^B = (\Pi^B_1, \cdots, \Pi^B_{N^B})^\intercal\) is a specified vector of target inclusion probabilities. Under the assumptions in (\ref{H3-4}), the \(\pi^{2B}_k\) can also be tuned via \(\mathbf{K}^{AB}\), that is, through \(\mathbf{\Pi}^{1AB}\).

\subsection{Optimal intermediate determinantal sampling design}
In the previous sections, we optimised \(\smash{\sum_{q=1}^Q \alpha_q^B \mathrm{var}(\hat{t}_{\mathbf{x}_q^B})}\) over \(\mathbf{\Theta}^{\!AB}\) and \(\mathbf{\Pi}^{1AB}\). According to Proposition \ref{variance}, this objective function also depends on \(\tilde{\mathbf{\Delta}}^A\), which is itself a function of \(\mathbf{K}^A\) when \(\mathcal{P}^A\) is determinantal:
\[
\smash{\tilde{\mathbf{\Delta}}^A = (\mathbf{I}_{N^A} \odot \mathbf{K}^A)^{-1} \left[ \mathbf{K}^A \odot (\mathbf{I}_{N^A} - \overline{\mathbf{K}^A}) \right] (\mathbf{I}_{N^A} \odot \mathbf{K}^A)^{-1}}.
\]
Therefore, for fixed \(\mathbf{\Theta}^{\!AB}\) and \(\mathcal{P}^{AB}\), one can search for the kernel \(\mathbf{K}^A\) that solves the following optimisation problem:
\begin{equation}
\min_{\mathbf{K}^A} \sum_{q=1}^Q \alpha_q^B \mathrm{var}(\hat{t}_{\mathbf{x}_q^B}) \quad \text{subject to} \quad
\left\{
\begin{array}{l}
\mathbf{K}^A = \overline{\mathbf{K}^A}^\intercal, \\
\mathbf{K}^A \mathbf{K}^A = \mathbf{K}^A, \\
\mathrm{diag}(\mathbf{K}^A) = \Pi^A.
\end{array}
\right.
\end{equation}
These constraints ensure that \(\mathbf{K}^A\) is an orthogonal projection matrix with a prescribed diagonal, yielding an optimal fixed-size intermediate determinantal sampling design with given inclusion probabilities.

\paragraph{}
Given that the first-order and joint inclusion probabilities are explicitly known, it is also possible to employ the optimal target Horvitz-Thompson estimator, minimising instead, under the same set of constraints:
\[
\smash{\sum_{q=1}^Q \alpha_q^B \mathrm{var}(\hat{t}^{HT}_{\mathbf{x}_q^B})}
\]

\paragraph{}
Furthermore, since the variance of the intermediate Horvitz-Thompson estimators depends on \(\mathbf{K}^A\), one might also aim to enhance estimation accuracy for a set \(\mathbf{X}^A\) of \(P\) auxiliary variables defined on \(U^A\). This leads to composite objective functions, such as:
\[
\smash{\sum_{p=1}^P \alpha_p^A \mathrm{var}(\hat{t}^{HT}_{\mathbf{x}_p^A}) + \sum_{q=1}^Q \alpha_q^B \mathrm{var}(\hat{t}_{\mathbf{x}_q^B})}
\quad \text{or} \quad
\smash{\sum_{p=1}^P \alpha_p^A \mathrm{var}(\hat{t}^{HT}_{\mathbf{x}_p^A}) + \sum_{q=1}^Q \alpha_q^B \mathrm{var}(\hat{t}^{HT}_{\mathbf{x}_q^B})}.
\]
In the next section, we will discuss practical methods to solve these optimisation problems.

\section{Application study}
\subsection{Global setting
}In this section \(U^A\) denotes a set of Primary Units (PUs) within the Normandy region of France. A random sample \(\mathbb{S}^A\) is drawn from \(U^A\) to efficiently coordinate a network of surveyors conducting face-to-face household interviews. Each selected PU is assigned to a surveyor. Within each chosen PU, a random sample of households is subsequently selected for the survey. The PUs are organized to minimize travel distances between households while ensuring a balanced workload for surveyors. In this section, we focus exclusively on the selection of the PUs. \(U^A\) is primarily used for most surveys. However, the characteristics of its units may not always meet the requirements of certain specific face-to-face household surveys. Therefore, an alternative set of Primary Units, \(U^B\), is introduced, from which a second sample \(\mathbb{S}^{2B}\) has to be drawn. Since the same network of surveyors is employed for both types of surveys, it is crucial that the two samples are approximately the same size, and that each selected unit from \(U^B\) is geographically close to a unit from \(U^A\) so that they can be assigned to the same surveyor. To maintain the quality of the estimates, it is also important that both samples — drawn from \(U^A\) and \(U^B\) respectively — are balanced with respect to specific sets of auxiliary variables \(\mathbf{X}^A\) and \(\mathbf{X}^B\).

 \paragraph{}
We especially build two specific sets $U^A$ and $U^B$. Their respective sizes are $N^A=250$ and $N^B=337$ (see Figure \ref{Fig4}). 

\begin{figure}[h!]
\begin{subfigure}[t]{0.47\textwidth}
\centering
\includegraphics[scale=1.4]{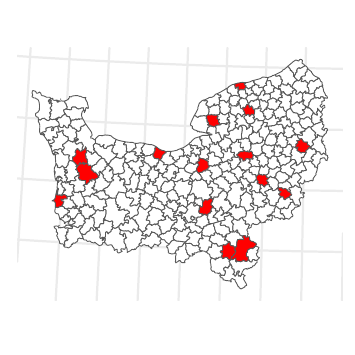}\vfill
    \caption{$U^A$ of size $N^A=250$}
    \end{subfigure}
    \begin{subfigure}[t]{0.47\textwidth}
\centering
\includegraphics[scale=1.4]{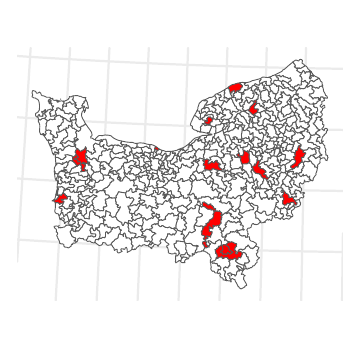}\vfill
    \caption{$U^B$ of size $N^B=337$}
    \end{subfigure}
    \caption{One unit of $\SS^A$ is close to one of $\SS^{2B}$, while being almost of the same number and, to the extend possible, $\SS^A$ and $\SS^B$ are balanced with respect to specific sets of auxiliary variables $\mathbf{X}^A$ and $\mathbf{X}^B$.}\label{Fig4}
\end{figure}
\paragraph{}
To achieve our goal to implement the aforementioned methodology, we construct a specific link matrix \(\mathbf{L}^{\!\!AB}\). For each unit \(i \in U^A\), we first consider all units in \(U^B\), whose territories overlap with that of \(i\), as linked to \(i\). As noted in Section \ref{Sec_Total}, the resulting number of links is too large to manage the associated matrices effectively. Therefore, among the units of \(U^B\) overlapping with unit \(i \in U^A\), only the two units sharing the smallest population with \(i\) are retained as linked (see Figure \ref{Fig5}). 
This procedure leaves some units of \(U^B\) without any links; these cases are treated separately. The final number of links amounts to 631.

\begin{figure}[h!]
\centering
\includegraphics[scale=0.05,trim=6pt 45pt 6pt 45pt, clip, width=0.5\textwidth]{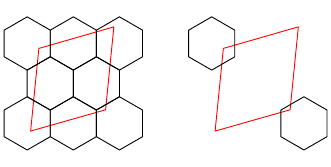}
    \caption{\raggedright Construction of $\mathbf{L}^{\!\!AB}$ to obtain a \textit{reasonable} number of links: Among the units of $U^B$ whose territory overlaps that of unit $i$ of $U^A$, only the two units sharing the smallest number on inhabitants with $i$ are considered linked to this unit.}\label{Fig5}
\end{figure}
\paragraph{}
For each population $U^A$ and $U^B$ a sampling frame is available. $\mathbf{X}^A$ and $\mathbf{X}^B$ consist of two sets of $P=Q=3$ auxiliary variables:
\begin{enumerate}
\item the total number of households, whose \textit{head of household} is aged 55 and more,
\item the total amount of retirement pensions,
\item the total amount of work incomes.
\end{enumerate}

We also consider $\mathbf{y}^A$ and $\mathbf{y}^B$  as variables of interest, representing the total amount of unemployment income. While these variables are not included in the optimization process, we assess the impact of our approach on the accuracy of their total estimates across the entire population
Finally, the intermediate first-order inclusion probabilities are prescribed. The $\Pi^A_k$ are proportional to the number of inhabitants with a sampling size fixed to $n^A=15$. 

\subsection{Intermediate and target balanced samplings}

To obtain intermediate and target balanced samplings, we simultaneously consider the three above mentionned  optimisation problems over $\mathbf{\Theta}^{\!AB}$, $\mathbf{\Pi}^{1AB}$ and $\mathbf{K}^A$. This global problem can be expressed as: 

\begin{equation}
\underset{\mathbf{K}^A,\mathbf{\Theta}^{\!AB},\mathbf{\Pi}^{1AB}}{Min}\left [\sum_{p=1}^P \alpha^A_p var(\hat{t}^{HT}_{\mathbf{x}^A_p})+ \sum_{q=1}^Q \alpha_q^B var(\hat{t}_{\mathbf{x}^B_q}) \right ]\text{ s.c. }\left \{ 
\begin{array}{l}
{\mathbf{K}^A}^\intercal=\mathbf{K}^A,\\
\mathbf{K}^A\mathbf{K}^A=\mathbf{K}^A,\\
diag(\mathbf{K}^A)=\Pi^A,\\
(\mathbf{L}^{\!\!AB}\odot \mathbf{\Theta}^{\!AB})^\intercal \uvec{1}_{N^A}=\uvec{1}_{N^B},\\
(\mathbf{L}^{\!\!AB}\odot\mathbf{\Pi}^{1AB}_1)\uvec{1}_{N^B}=\uvec{1}_{N^A}.
\end{array}
\right.
\end{equation}

The previous optimisation process is restricted to real kernels, which leads to the simplification $\overline{\mathbf{K}}^{A^\intercal}=\mathbf{K}^A \Leftrightarrow {\mathbf{K}^A}^\intercal=\mathbf{K}^A$. The constraint $(\mathbf{L}^{\!\!AB}\odot\mathbf{\Pi}^{1AB}1)\uvec{1}_{N^B}=\uvec{1}_{N^A}$ provides a matrix formulation for the requirement that each random set $\mathbb{S}_i$ has a fixed cardinality of 1, as stated in Assumptions \ref{H3-4}. The parameters are set as $\alpha^A_p=1/(t_{\mathbf{x}^A_p})^2$ and $\alpha^B_q=1/(t_{\mathbf{x}^B_q})^2$ for $p, q = 1, 2, 3$, ensuring that the objective function incorporates the squared coefficient of variation for each estimator:
$$\sum_{p=1}^P \operatorname{cv}^2(\hat{t}^{HT}_{\mathbf{x}^A_p})+ \sum_{q=1}^Q \operatorname{cv}^2(\hat{t}_{\mathbf{x}^B_q}). $$

We also consider initial values for each parameter:
\begin{enumerate}
\item Let $\mathbf{K}^A_0 = \mathbf{P}^{\mathbf{\Pi}^A}$, where $\mathbf{P}^\mathbf{\Pi}$ is the family of projection matrices with diagonal $\mathbf{\Pi}$, as introduced by \citet{loonis2019determinantal} and further analyzed by \citet{loonis2023}.
\item Let $\mathbf{\Theta}^{\!AB}_0$ be such that its non-zero entries are constant for each unit in $U^A_k$; in other words, except for the entries that are equal to 0, all other entries are constant within each column, with each column summing to 1.
\item Let $\mathbf{\Pi}^{1AB}_0$ be such that its non-zero entries are constant for each unit of $U^B_i$; in other words, apart from the entries that are equal to 0, all other entries are constant within each row, with each row summing to 1. The initial sampling designs $\mathcal{P}^B_i$ have uniform probabilities.
\end{enumerate}
\paragraph{}
The optimisation process consists in the following Coordinate descent Algorithm:

\begin{algorithm}[H]
\caption{Alternating optimization of \( C_{\mathbf{X}^P,\mathbf{X}^Q}(\mathbf{K}^A,\mathbf{\Theta}^{\!AB},\mathbf{\Pi}^{1AB}) = \sum_{p=1}^P \operatorname{cv}^2(\hat{t}^{HT}_{\mathbf{x}^A_p}) + \sum_{q=1}^Q \operatorname{cv}^2(\hat{t}_{\mathbf{x}^B_q}) \)}
\label{algo:alternating-optimization}
\begin{algorithmic}[1]
\Require Initial values: \( \mathbf{K}^A_0 \), \( \mathbf{\Theta}^{\!AB}_0 \), \( \mathbf{\Pi}^{1AB}_0 \); number of iterations \( R \)
\vspace{0.5em}
\Statex \textbf{Repeat for} \( r = 1 \) \textbf{to} \( R \), optimise $C(...)$ over:
    \State \( \mathbf{K}^A \) using Algorithm~\ref{algo:rotation-optimization} (see Section~\ref{Greedy}), with \( \mathbf{\Theta}^{\!AB} = \mathbf{\Theta}^{\!AB}_0 \), \( \mathbf{\Pi}^{1AB} = \mathbf{\Pi}^{1AB}_0 \)
    \Statex \quad Let \( \mathbf{K}^A_1 \) be the result
    \State \( \mathbf{\Theta}^{\!AB} \) using Proposition~\ref{PropOpt}, with \( \mathbf{K}^A = \mathbf{K}^A_1 \), \( \mathbf{\Pi}^{1AB} = \mathbf{\Pi}^{1AB}_0 \)
    \Statex \quad Let \( \mathbf{\Theta}^{\!AB}_1 \) be the result
    \State \( \mathbf{\Pi}^{1AB} \) using Proposition~\ref{PiOpt}, with \( \mathbf{\Theta}^{\!AB} = \mathbf{\Theta}^{\!AB}_1 \)
    \Statex \quad Let \( \mathbf{\Pi}^{1AB}_1 \) be the result
    \Statex Update: \( \mathbf{K}^A_0 \gets \mathbf{K}^A_1 \), \( \mathbf{\Theta}^{\!AB}_0 \gets \mathbf{\Theta}^{\!AB}_1 \), \( \mathbf{\Pi}^{1AB}_0 \gets \mathbf{\Pi}^{1AB}_1 \)
\end{algorithmic}
\end{algorithm}

\subsection{Results}

The figures of this section are provided in Appendix \ref{Figures}. 
\paragraph{}
Figure \ref{Fig_opt} demonstrates the effectiveness of our optimisation process, which consistently yields more accurate estimates for the target auxiliary variables, while enhancing the precision of the intermediate estimates. 

\begin{itemize}
\item The initial value of the objective function is $8.2$. After $R=5$ iteration and $15$ steps, this value is reduced by a factor of 10, reaching a final value of $0.82$. 
\item The reduction is particularly pronounced for the \textit{target} contribution ($\sum_{q=1}^Q \operatorname{cv}^2(\hat{t}_{\mathbf{x}^B_q})$), which decreases by a factor of 12.6 (from 7.3 to 0.58), whereas the intermediate contribution ($\smash{\sum_{q=1}^Q \operatorname{cv}^2(\hat{t}^{HT}_{\mathbf{x}^A_q})}$) is reduced by a factor of 3.8 (from 0.9 to 0.25).
\end{itemize}

Figure \ref{samplesize} illustrates the variability of the target sample size of $\mathbb{S}^{2B}$. With a fixed intermediate sampling size of $n^A = 15$, the target sample size is also 15 in 90 percent of the cases. 
\paragraph{}Figure \ref{Fig_opt_interet} illustrates the effectiveness of our procedure for the variables of interest, $\mathbf{y}^A$ and $\mathbf{y}^B$, which are not involved in the optimization process but are correlated with the auxiliary information. For the intermediate population, the Horvitz-Thompson estimator is used. For the target population, the optimization procedure leads naturally to the use of the GWSM estimator. At each iteration step, Proposition \ref{Probacible} also allows the use of the Horvitz-Thompson estimator, whose precision is likewise computed. 

\paragraph{}For the target population, the coefficient of variation of the GWSM estimator is reduced by a factor of 3.6, and its squared value by a factor of 13.4—matching the gain observed for the auxiliary variables. Using the Horvitz-Thompson estimator for the target population also yields a reduction in the coefficient of variation by a factor of 1.8, and in its square by a factor of 3.2. This reduction is, as expected, less pronounced.  
\paragraph{}
To conclude this section, we note that one could, in principle, attempt to optimize the criterion involving the target Horvitz-Thompson estimator using Proposition \ref{Probacible}:
$$\sum_{p=1}^P \operatorname{cv}^2(\hat{t}^{HT}_{\mathbf{x}^A_p})+ \sum_{q=1}^Q \operatorname{cv}^2(\hat{t}^{HT}_{\mathbf{x}^B_q}). $$

However, computing the full set of target joint inclusion probabilities is computationally prohibitive, making this approach less practical. This limitation highlights a promising avenue for future research.

\section{Conclusion}In this article, we leverage the versatility of determinantal sampling designs to address a practical problem requiring knowledge of high-order inclusion probabilities within the framework of two-stage indirect sampling. We begin by proposing three extensions to this framework (Propositions \ref{variance}, \ref{PropOpt}, and \ref{PiOpt}). Furthermore, when both the intermediate and second-stage sampling designs are determinantal, we show that the target inclusion probabilities can be expressed in closed form. The optimal GWSM weight matrix can not only be computed in practice, but also further optimised.
\paragraph{}
This setting gives rise to several challenges, such as improving optimisation procedures to efficiently handle the optimised target Horvitz–Thompson estimator,  identifying intermediate and second-stage determinantal designs that minimise the variance of the target sample size, looking for close-from formulae for the target inclusion probabilities for general second-stage determinantal sampling designs or searching for connections with the double weight-share method as discussed in \citet{medous2025optimal}. The observed flexibility of the framework also opens the door to addressing other problems, such as identifying the most spatially balanced sampling design, exploring alternatives to the Horvitz–Thompson estimator, as discussed in \citet{theberge2017estimation}. 
\paragraph{}
Beyond these applications, new opportunities has recently emerged with the introduction by \citet{panahbehagh2023geometric} of a novel family of sampling designs, whose inclusion probabilities are also explicitly known and parametrisable at any order. Comparing the properties and versatility of these two families of designs appears to be a promising direction for future research.

\section*{Acknowledgments}
 The views expressed in this article are those of the author and do not necessarily represent those of his affiliated institutions. 
The author thanks Olivier Guin for the relevance of his comments, which led to substantial improvements in this version of the article. The author is also grateful to the participants of the internal seminar of the Department of Statistical Methods at Insee for their valuable feedback and constructive suggestions. Any remaining errors, omissions, or inaccuracies are the sole responsibility of the author.

\bibliographystyle{apalike}
\bibliography{biblio}

\appendix

\section{Proof of Proposition \ref{variance}}\label{Proofvariance}
We provide the proof for the particular case $N^A=3$ and $N^B=2$. We write $$\operatorname{var}(\hat{t}_{\mathbf{y}^B})=\sum_{k\in U^B}\sum_{l\in U^B} y_k^By_l^B \mathbb{E}(w_kw_l)-(t_{\mathbf{y}^B})^2$$
$$y^B_ky^B_l\mathbb{E}(w_kw_l)=\sum_{i\in U^A}\sum_{j\in U^A}\frac{y_k^By_l^B\ell^{AB}_{ik}\theta^{AB}_{ik}\ell^{AB}_{jl}\theta^{AB}_{jl}\overset{\pi^{A}_{ij}}{\overbrace{\mathbb{E}(1(i\in S^A)1(j\in S^A))}}\overset{\pi^{AB}_{(i,k)(j,l)}}{\overbrace{\mathbb{E}(1(k\in S_i^A)1(l\in S_j^A)) }}}{\pi^{A}_i\pi^{A}_j\pi^{AB}_{(i,k)}\pi^{AB}_{(j,l)}}=$$ $$
\sum_{i\in U^A}\sum_{j\in U^A}\frac{y_k^By_l^B\ell^{AB}_{ik}\theta^{AB}_{ik}\ell^{AB}_{jl}\theta^{AB}_{jl}\pi^{A}_{ij}\pi^{AB}_{(i,k)(j,l)}}{\pi^{A}_i\pi^{A}_j\pi^{AB}_{(i,k)}\pi^{AB}_{(j,l)}}$$

We look for a matrix $\mathbf{\mathcal{Q}}$ such that $\sum_{k\in U^B}\sum_{l\in U^B} y_k^By_l^B \mathbb{E}(w_kw_l)=\mathbf{v}_{\mathbf{\Theta}^{\!AB}} ^\intercal \mathbf{\mathcal{Q}} \mathbf{v}_{\mathbf{\Theta}^{\!AB}}$, with $\mathbf{\mathcal{Q}}$ function of $\pi^A,\pi^{2A},\mathbf{\Pi}^{1AB},\mathbf{\Pi}^{2AB},\mathbf{y}^B,\mathbf{L}^{\!\!AB}$. We observe that:

\[ \begin{array}{c}
\sum_{k\in U^B}\sum_{l\in U^B}\sum_{i\in U^A}\sum_{j\in U^A}y_k^By_l^B\theta^{AB}_{ik}\theta^{AB}_{jl}\ell_{ik}\ell_{jl}\pi^{A}_{ij}\pi^{AB}_{(i,k)(j,l)}=\\
\mathbf{v}_{\mathbf{\Theta}^{\!AB}}^\intercal \mathbf{D}_{\mathbf{v}_{L_{AB}}} \\
\left (
\setlength{\arraycolsep}{2pt} % Réduit l'espacement entre les colonnes
\begin{array}{@{}cccccc@{}}
y_1^By_1 ^B\pi^{A}_{11 }\pi^{AB}_{(1,1)(1,1)}&
y_1 ^By_1 ^B\pi^{A}_{12}\pi^{AB}_{(1,1)(2,1)}&
y_1 ^By_1 ^B\pi^{A}_{13}\pi^{AB}_{(1,1)(3,1)}&
y_1 ^By_2 ^B\pi^{A}_{11 }\pi^{AB}_{(1,1)(1,2)}&
y_1 ^By_2 ^B\pi^{A}_{12}\pi^{AB}_{(1,1)(2,2)}&
y_1 ^By_2 ^B\pi^{A}_{13}\pi^{AB}_{(1,1)(3,2)}
\\
y_1 ^By_1 ^B\pi^{A}_{21 }\pi^{AB}_{(2,1)(1,1)}&
y_1 ^By_1 ^B\pi^{A}_{22}\pi^{AB}_{(2,1)(2,1)}&
y_1 ^By_1 ^B\pi^{A}_{23}\pi^{AB}_{(2,1)(3,1)}&
y_1 ^By_2 ^B\pi^{A}_{21 }\pi^{AB}_{(2,1)(1,2)}&
y_1 ^By_2 ^B\pi^{A}_{22}\pi^{AB}_{(2,1)(2,2)}&
y_1 ^By_2 ^B\pi^{A}_{23}\pi^{AB}_{(2,1)(3,2)}
\\ 
y_1 ^By_1 ^B\pi^{A}_{31 }\pi^{AB}_{(3,1)(1,1)}&
y_1 ^By_1 ^B\pi^{A}_{32}\pi^{AB}_{(3,1)(2,1)}&
y_1 ^By_1 ^B\pi^{A}_{33}\pi^{AB}_{(3,1)(3,1)}&
y_1 ^By_2 ^B\pi^{A}_{31 }\pi^{AB}_{(3,1)(1,2)}&
y_1 ^By_2 ^B\pi^{A}_{32}\pi^{AB}_{(3,1)(2,2)}&
y_1 ^By_2 ^B\pi^{A}_{33}\pi^{AB}_{(3,1)(3,2)}
\\ 
y_2^By_1 ^B\pi^{A}_{11 }\pi^{AB}_{(1,2)(1,1)}&
y_2^By_1 ^B\pi^{A}_{12}\pi^{AB}_{(1,2)(2,1)}&
y_2^By_1 ^B\pi^{A}_{13}\pi^{AB}_{(1,2)(3,1)}&
y_2^By_2 ^B\pi^{A}_{11 }\pi^{AB}_{(1,2)(1,2)}&
y_2^By_2 ^B\pi^{A}_{12}\pi^{AB}_{(1,2)(2,2)}&
y_2^By_2 ^B\pi^{A}_{13}\pi^{AB}_{(1,2)(3,2)}
\\
y_2^By_1 ^B\pi^{A}_{21 }\pi^{AB}_{(2,2)(1,1)}&
y_2^By_1 ^B\pi^{A}_{22}\pi^{AB}_{(2,2)(2,1)}&
y_2^By_1 ^B\pi^{A}_{23}\pi^{AB}_{(2,2)(3,1)}&
y_2^By_2 ^B\pi^{A}_{21 }\pi^{AB}_{(2,2)(1,2)}&
y_2^By_2 ^B\pi^{A}_{22}\pi^{AB}_{(2,2)(2,2)}&
y_2^By_2 ^B\pi^{A}_{23}\pi^{AB}_{(2,2)(3,2)}
\\
y_2^By_1 ^B\pi^{A}_{31 }\pi^{AB}_{(3,2)(1,1)}&
y_2^By_1 ^B\pi^{A}_{32}\pi^{AB}_{(3,2)(2,1)}&
y_2^By_1 ^B\pi^{A}_{33}\pi^{AB}_{(3,2)(3,1)}&
y_2^By_2 ^B\pi^{A}_{31 }\pi^{AB}_{(3,2)(1,2)}&
y_2^By_2 ^B\pi^{A}_{32}\pi^{AB}_{(3,2)(2,2)}&
y_2^By_2 ^B\pi^{A}_{33}\pi^{AB}_{(3,2)(3,2)}
\end{array}
\right ) \\ \mathbf{D}_{\mathbf{v}_{L_{AB}}}\mathbf{v}_{\mathbf{\Theta}^{\!AB}} \\ 
=\mathbf{v}_{\mathbf{\Theta}^{\!AB}}^\intercal \mathbf{D}_{\mathbf{v}_{L_{AB}}}\left [((\mathbf{y}^B\mathbf{y}^{B^\intercal})\otimes\mathbf{J}_{N^A})\odot(\mathbf{J}_{N^B}\otimes \Pi^{2A})\odot \mathbf{\Pi}^{2AB}\right ]\mathbf{D}_{\mathbf{v}_{L_{AB}}}\mathbf{v}_{\mathbf{\Theta}^{\!AB}}.
\end{array}
\]
As a consequence
$\sum_{k\in U^B}\sum_{l\in U^B} y_k^By_l^B \mathbb{E}(w_kw_l)=
\sum_{k\in U^B}\sum_{l\in U^B}\sum_{i\in U^A}\sum_{j\in U^A}\frac{y_k^By_l^B\theta^{AB}_{ik}\theta^{AB}_{jl}\ell^{AB}_{ik}\ell^{AB}_{jl}\pi^{A}_{ij}\pi^{AB}_{(i,k)(j,l)}}{\pi^{A}_i\pi^{A}_j\pi^{AB}_{(i,k)}\pi^{AB}_{(j,l)}}=$ $$
\mathbf{v}_{\mathbf{\Theta}^{\!AB}}^\intercal \mathbf{D}_{\mathbf{v}_{L_{AB}}}\left [((\mathbf{y}^B\mathbf{y}^{B^\intercal})\otimes\mathbf{J}_{N^A})\odot(\mathbf{J}_{N^B}\otimes (D^{-1}_{\Pi^A}\Pi^{2A}D^{-1}_{\Pi^A}))\odot \left(D^{-1}_{\mathbf{v}_{\mathbf{\Pi}^{1AB}}}\mathbf{\Pi}^{2AB}D^{-1}_{\mathbf{v}_{\mathbf{\Pi}^{1AB}}}\right)\right ]\mathbf{D}_{\mathbf{v}_{L_{AB}}}\mathbf{v}_{\mathbf{\Theta}^{\!AB}},$$ where $\Pi^{2A}$ is the $(N^A\times N^A)$ matrix such that $\Pi^{2A}_{ij}=\pi^A_{ij}$.  To conclude we observe that $D^{-1}_{\Pi^A}\Pi^{2A}D^{-1}_{\Pi^A}=\tilde{\mathbf{\Delta}}^A+\mathbf{J}_{N^A}$ and $D^{-1}_{\mathbf{v}_{\mathbf{\Pi}^{1AB}}}\mathbf{\Pi}^{2AB}D^{-1}_{\mathbf{v}_{\mathbf{\Pi}^{1AB}}}=\tilde{\mathbf{\Delta}}^{AB}+\mathbf{J}_{N^AN^B}$, leading to Proposition \ref{variance}, as the additional terms equal $(t_{\mathbf{y}^B})^2$.

\section{Implementing some matrices}\label{implement}
In this section we showcase how implementing some of the matrices used in the text, when 
$N^A=N^B=3$. We start by the unbiasedness constraint $(\mathbf{L}^{\!\!AB} \odot \mathbf{\Theta}^{\!AB})^\intercal\uvec{1}_{N^A}=\uvec{1}_{N^B}\Leftrightarrow \mathbf{E}^{AB}\mathbf{v}_{\mathbf{\Theta}^{\!AB}}=\uvec{1}_{N^B}$.  

$$ \left \{
\begin{array}{c}
\left( \underbrace{\left( 
\begin{array}{ccc}
     \ell^{AB}_{11} & \ell^{AB}_{12} & \ell^{AB}_{13}\\
     \ell^{AB}_{21} & \ell^{AB}_{22} & \ell^{AB}_{23}\\
     \ell^{AB}_{31} & \ell^{AB}_{32} & \ell^{AB}_{33}\\
\end{array}
\right)}_{\mathbf{L}^{\!\!AB}}
\odot 
\underbrace{\left( 
\begin{array}{ccc}
     \theta^{AB}_{11} & \theta^{AB}_{12}  & \theta^{AB}_{13}\\
     \theta^{AB}_{21} & \theta^{AB}_{22}  & \theta^{AB}_{23}\\
     \theta^{AB}_{31} & \theta^{AB}_{32}  & \theta^{AB}_{33}\\
\end{array}
\right) }_{\mathbf{\Theta}^{\!AB}}
\right)^\intercal
\left( 
\begin{array}{cc}
1 \\
1 \\
1 \\
\end{array}
\right) 
= 
\left( 
\begin{array}{cc}
1 \\
1 \\
1\\
\end{array}
\right)\\
\Updownarrow \\

\underbrace{\left( 
\begin{array}{ccccccccc}
    \ell^{AB}_{11} & \ell^{AB}_{21} & \ell^{AB}_{31} & 0 & 0 & 0 & 0 & 0 & 0\\
    0 & 0 & 0 & \ell^{AB}_{12} & \ell^{AB}_{22} & \ell^{AB}_{32} & 0 & 0 & 0 \\
    & 0 & 0 & 0& 0 & 0 & \ell^{AB}_{13} & \ell^{AB}_{23} & \ell^{AB}_{33}
\end{array}
\right)}_{\mathbf{E}^{AB}}
\underbrace{\left( 
\begin{array}{c}
     \theta^{AB}_{11} \\
     \theta^{AB}_{21} \\
     \theta^{AB}_{31} \\
     \theta^{AB}_{12} \\
     \theta^{AB}_{22} \\
     \theta^{AB}_{32} \\
     \theta^{AB}_{13} \\
     \theta^{AB}_{23} \\
     \theta^{AB}_{33} \\
\end{array}
\right)}_{\mathbf{v}_{\mathbf{\Theta}^{\!AB}}}= 
\left( 
\begin{array}{cc}
1 \\
1 \\
1 \\
\end{array}
\right).
\end{array} \right.
$$

We now consider a specific link matrix to show how to implement $\mathbf{\Pi}^{1AB}$ and $\mathbf{\Pi}^{2AB}$.

 $$\mathbf{L}^{\!\!AB}=\left (
\begin{array}{ccc}
   1  & 0 & 1\\
   1  &  1 & 0 \\
   0 & 1 & 1 
\end{array}\right ) \Longrightarrow \mathbf{\Pi}^{1AB}=
\left (
\begin{array}{ccc}
   \pi^{AB}_{11}  & 0 & \pi^{AB}_{13}\\
   \pi^{AB}_{21}  &  \pi^{AB}_{22} & 0 \\
   0 & \pi^{AB}_{32} & \pi^{AB}_{33}\\
\end{array} \right).
$$
When Assumptions \ref{H3-4} hold, one has $\pi^{AB}_{11}+\pi^{AB}_{13}$=$\pi^{AB}_{21}+\pi^{AB}_{22}$=$\pi^{AB}_{32}+\pi^{AB}_{33}=1$.
\paragraph{}
To fill in $\mathbf{\Pi}^{2AB}$, in line with Section \ref{Proofvariance}, one has to consider $\mathbf{\Pi}^{2AB}_{mn} = \pi^{AB}_{(i,k)(j,l)}$ with $1\leq m,n\leq N_AN_B$ and $(i,j,k,l)$ such that $1\leq i,j \leq N^A$, $1\leq k,l \leq N^B$, $m=i + (k-1)N^A$ and $n=j + (l-1)N^A$.
\newline
When Assumptions \ref{H3-4} hold, $\mathbf{\Pi}^{2AB}$ writes:

\[
\begin{blockarray}{cccc | ccccccccc}
 & & & l & 1 & 1 & 1 & 2 & 2 & 2 & 3 & 3 & 3 \\
 & & & j & 1 & 2 & 3 & 1 & 2 & 3 & 1 & 2 & 3 \\ \hline
 & & & n & 1 & 2 & 3 & 4 & 5 & 6 & 7 & 8 & 9 \\ \hline
k & i & m &   &   &   &   &   &   &   &   &   &   \\ \hline
\begin{block}{cccc | ccccccccc}
1 & 1 & 1 &   & \pi^{AB}_{(1,1)} & \pi^{AB}_{(1,1)}\pi^{AB}_{(2,1)} & 0 & 0 & \pi^{AB}_{(1,1)}\pi^{AB}_{(1,2)} & \pi^{AB}_{(1,1)}\pi^{AB}_{(3,2)} & 0 & 0 & \pi^{AB}_{(1,1)}\pi^{AB}_{(3,3)} \\
1 & 2 & 2 &   & \pi^{AB}_{(2,1)}\pi^{AB}_{(1,1)} & \pi^{AB}_{(2,1)} & 0 & 0 & 0 & \pi^{AB}_{(2,1)}\pi^{AB}_{(3,2)} & \pi^{AB}_{(2,1)}\pi^{AB}_{(1,3)} & 0 & \pi^{AB}_{(2,1)}\pi^{AB}_{(3,3)} \\
1 & 3 & 3 &   & 0 & 0 & 0 & 0 & 0 & 0 & 0 & 0 & 0 \\
2 & 1 & 4 &   & 0 & 0 & 0 & 0 & 0 & 0 & 0 & 0 & 0 \\
2 & 2 & 5 &   & \pi^{AB}_{(2,2)}\pi^{AB}_{(1,1)} & 0 & 0 & 0 & \pi^{AB}_{(2,2)} & \pi^{AB}_{(2,2)}\pi^{AB}_{(3,2)} & \pi^{AB}_{(2,2)}\pi^{AB}_{(3,1)} & 0 & \pi^{AB}_{(2,2)}\pi^{AB}_{(3,3)} \\
2 & 3 & 6 &   & \pi^{AB}_{(3,2)}\pi^{AB}_{(1,1)} & \pi^{AB}_{(3,2)}\pi^{AB}_{(2,1)} & 0 & 0 & \pi^{AB}_{(3,2)}\pi^{AB}_{(2,2)} & \pi^{AB}_{(3,2)} & \pi^{AB}_{(3,2)}\pi^{AB}_{(1,3)} & 0 & 0 \\
3 & 1 & 7 &   & 0 & \pi^{AB}_{(1,3)}\pi^{AB}_{(2,1)} & 0 & 0 & \pi^{AB}_{(1,3)}\pi^{AB}_{(2,2)} & \pi^{AB}_{(1,3)}\pi^{AB}_{(3,2)} & \pi^{AB}_{(1,3)} & 0 & \pi^{AB}_{(1,3)}\pi^{AB}_{(3,3)} \\
3 & 2 & 8 &   & 0 & 0 & 0 & 0 & 0 & 0 & 0 & 0 & 0 \\
3 & 3 & 9 &   & \pi^{AB}_{(3,3)}\pi^{AB}_{(1,1)} & \pi^{AB}_{(3,3)}\pi^{AB}_{(2,1)} & 0 & 0 & \pi^{AB}_{(3,3)}\pi^{AB}_{(2,2)} & 0 & \pi^{AB}_{(3,3)}\pi^{AB}_{(1,3)} & 0 & \pi^{AB}_{(3,3)} \\
\end{block}
\end{blockarray}
\]
\section{Proof of Proposition \ref{PiOpt}}\label{ProofPiOpt}

When Assumptions \ref{H3-4} hold, $\tilde{\mathbf{\Delta}}^{AB}$ is a function of $\mathbf{\Pi}^{1AB}$, such that:

\[
\tilde{\Delta}^{AB}_{(i,k),(j,l)} = 
\begin{cases}
\displaystyle\frac{1 - \pi^{AB}_{(i,k)}}{\pi^{AB}_{(i,k)}} & \text{if } i=j, k=l\text{ and }\pi^{AB}_{(i,k)}>0, \\
-1 &  \text{if } i=j\text{ and } (k,l)\in U^B_i\times U^B_i, k\neq l,\\
0 & \text{otherwise.}
\end{cases}
\]
The only terms that contribute to \( \operatorname{var}(\hat{t}_{\mathbf{y}^B}) \) and depend on \( \mathbf{\Pi}^{1AB} \) are those for which \( i = j \), \( k = l \), and \( k \in U^B_i \). According to Section~\ref{Proofvariance}, we therefore only need to consider the following sum in our problem:

$$
\sum_{k\in U^B}\sum_{i\in U^A}\frac{(y_k^B)^2(\theta^{AB}_{ik})^2(\ell^{AB}_{ik})^2\pi^{A}_{i}\pi^{AB}_{(i,k)}}{(\pi^{A}_i\pi^{AB}_{(i,k)})^2}=\sum_{i\in U^A}\sum_{k\in U^B_i}\frac{(y_k^B)^2(\theta^{AB}_{ik})^2}{\pi^{A}_i\pi^{AB}_{(i,k)}}.$$
Minimising $\sum_{q=1}^Q \alpha^B_q \operatorname{var}(\hat{t}_{\mathbf{x}_q^B})$ over $\mathbf{\Pi}^{1AB}$ subject to \ref{H3-4} writes:

$$\underset{\mathbf{\Pi}^{1AB}}{Min}\sum_{i\in U^A}\sum_{k\in U^B_i}\sum_{q=1}^Q\frac{\alpha^B_q(\mathbf{x}_{q,k}^B)^2(\theta^{AB}_{ik})^2}{\pi^{A}_i\pi^{AB}_{(i,k)}}\text{ subject to }\sum_{k\in U^B_i}\pi^{AB}_{(i,k)}=1, \forall i.$$

Since, for all \( i \), the expression  
\[
\sum_{k\in U^B_i} \frac{(\theta^{AB}_{ik})^2 \sum_{q=1}^Q \alpha^B_q (\mathbf{x}_{q,k}^B)^2}{\pi^A_i \pi^{AB}_{(i,k)}}
\]  
depends solely and specifically on the set of parameters constrained by the condition \( \sum_{k \in U^B_i} \pi^{AB}_{(i,k)} = 1 \), the overall optimization problem can be decomposed into \( N^A \) independent subproblems. By omitting the constant factor \( \pi^A_i \), which plays no role in the optimization, and applying standard Lagrangian techniques, we directly obtain Proposition~\ref{PiOpt}.

\section{Proof of Proposition \ref{Probacible}}\label{ProofProbacible}

We make use of the properties outlined in Section~\ref{DSD}, and introduce two additional lemmas that will be instrumental in establishing the final proof. In this section, we consider two equivalent representations of a random sample $\mathbb{S}$. Consistent with the general framework of this article, $\mathbb{S}$ may either denote the list of integers indexing the selected units, or a vector in $[0,1]^N$, where the $i$-th entry equals $1$ if the corresponding unit is included in the sample and $0$ otherwise. As a result, we may allow for slight abuses of notation, which should nevertheless cause no ambiguity.
\subsection{Two useful lemmas}
\begin{lemma}
Under assumptions \ref{H3-4}, $\mathcal{P}^{AB}$ is determinantal.
\end{lemma}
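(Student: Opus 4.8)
The plan is to prove the lemma by decomposing the second-stage design over the primary units $i \in U^A$: each individual design $\mathcal{P}^B_i$ will be shown to be determinantal with a rank-one orthogonal projection kernel, and these blocks will then be glued together using the independence assumption $H3$ together with the stratification property of determinantal designs recalled in Section~\ref{DSD}. Concretely, I would view $\mathcal{P}^{AB}$ as a sampling design on the ground set $U^A\times U^B$ whose generic sample is $\mathbb{S}^{AB}=\{(i,k):i\in U^A,\ k\in \mathbb{S}^B_i\}$, and exploit the partition $U^A\times U^B=\bigsqcup_{i\in U^A}\big(\{i\}\times U^B\big)$: since $\mathcal{P}^B_i$ never selects a pair outside $\{i\}\times U^B_i$, the design $\mathcal{P}^{AB}$ samples within these $N^A$ strata, identifying $\mathbb{S}^B_i$ with $\mathbb{S}^{AB}\cap(\{i\}\times U^B)$, and by $H3$ it does so independently across them.

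Next I would check that each $\mathcal{P}^B_i$, seen as a design on $\{i\}\times U^B$, is determinantal. Under $H4$ it selects exactly one unit of $U^B_i$, so $\Pr(\{k\}\subseteq \mathbb{S}^B_i)=\pi^{AB}_{(i,k)}$ while $\Pr(s\subseteq \mathbb{S}^B_i)=0$ whenever $|s|\ge 2$ or $s$ meets $U^B\setminus U^B_i$. Define $\mathbf{q}_i\in\mathbb{R}^{N^B}$ by $(\mathbf{q}_i)_k=\sqrt{\pi^{AB}_{(i,k)}}$ for $k\in U^B_i$ and $(\mathbf{q}_i)_k=0$ otherwise; then $\|\mathbf{q}_i\|^2=\sum_{k\in U^B_i}\pi^{AB}_{(i,k)}=1$, so $\mathbf{K}^{(i)}:=\mathbf{q}_i\mathbf{q}_i^\intercal$ is the orthogonal projection onto $\mathrm{span}(\mathbf{q}_i)$, hence Hermitian with spectrum $\{0,1\}\subseteq[0,1]$. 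Computing $\det(\mathbf{K}^{(i)}_{\mid s})$ directly — it equals $\pi^{AB}_{(i,k)}$ when $s=\{k\}$ with $k\in U^B_i$, it equals $0$ when $|s|\ge 2$ (rank one) or when $s$ hits the complement of $U^B_i$ (a zero row), and it equals $1$ when $s=\emptyset$ — matches every inclusion probability of $\mathcal{P}^B_i$, so $\mathcal{P}^B_i=DSD(\mathbf{K}^{(i)})$.

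Finally I would reassemble. For $s\subseteq U^A\times U^B$, write $s=\bigsqcup_i s_i$ with $s_i\subseteq\{i\}\times U^B$; independence across strata and the previous step give $\Pr(s\subseteq \mathbb{S}^{AB})=\prod_i\Pr(s_i\subseteq \mathbb{S}^B_i)=\prod_i\det(\mathbf{K}^{(i)}_{\mid s_i})=\det(\mathbf{K}^{AB}_{\mid s})$, where $\mathbf{K}^{AB}$ is the block-diagonal matrix with diagonal blocks $\mathbf{K}^{(1)},\dots,\mathbf{K}^{(N^A)}$ and the last equality is block-multiplicativity of the determinant; a block-diagonal stack of Hermitian contracting matrices being itself Hermitian contracting, this exhibits $\mathcal{P}^{AB}$ as $DSD(\mathbf{K}^{AB})$. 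I do not expect a serious obstacle here: the only care needed is (i) to observe that the direction ``independent determinantal strata $\Rightarrow$ block-diagonal determinantal kernel'' is the easy half of the stratification property of Section~\ref{DSD}, justified by the computation just given, and (ii) to reconcile this with the indexation $m=i+(k-1)N^A$ used in Section~\ref{implement}, under which the strata are interleaved rather than contiguous — this merely conjugates $\mathbf{K}^{AB}$ by a permutation matrix, which by the unitary-transformation property of Section~\ref{DSD} preserves determinantality, so it introduces no real difficulty.
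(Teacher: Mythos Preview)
Your proof is correct and follows essentially the same route as the paper: both construct, for each $i\in U^A$, the rank-one projection kernel with entries $\sqrt{\pi^{AB}_{(i,k)}\pi^{AB}_{(i,l)}}$, assemble them block-diagonally across the $N^A$ strata, and invoke a permutation to reconcile the indexation. If anything, your write-up is more explicit than the paper's in verifying the determinantal identity $\Pr(s\subseteq\mathbb{S}^{AB})=\det(\mathbf{K}^{AB}_{\mid s})$ for general $s$ and in justifying the block-diagonal reassembly.
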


\begin{proof}
Under Assumptions~\ref{H3-4},  \( \mathcal{P}^{AB} \) is fully characterized by the first-order inclusion probabilities \( \mathbf{\Pi}^{1AB} \), since for any subset \( s^{AB} \in 2^{U^A \times U^B} \), we have:
\[
\Pr(s^{AB} \subseteq \mathbb{S}^{AB}) = 
\begin{cases}
0 & \text{if } \exists\, ((i,k), (j,l)) \in s^{AB} \times s^{AB} \text{ such that } i = j \text{ and } k \neq l, \\
\underset{\{(i,k) \in s^{AB}\}}{\prod} \pi^{AB}_{(i,k)} & \text{otherwise}.
\end{cases}
\]Moreover, up to a permutation, \( \mathcal{P}^{AB} \) is a one-per-stratum design with \( H = N_A \) strata of size \( N_B \). Each stratum \( h \in \{1, \ldots, H\} \) is defined by the set \( \{(h,k): k \in \{1, \ldots, N_B\} \} \). Up to this permutation, define \( \mathbf{K}^{AB} \) as a block-diagonal matrix of size \( N^A N^B \times N^A N^B \), consisting of \( H = N^A \) blocks, each of dimension \( N^B \times N^B \). The \( h \)-th block, denoted \( \mathbf{K}^{h,AB} \), is defined by:
\[
\mathbf{K}^{h,AB}_{kl} =
\begin{cases}
\pi^{AB}_{(h,k)} & \text{if } k = l, \\
\sqrt{\pi^{AB}_{(h,k)} \, \pi^{AB}_{(h,l)}} & \text{if } k \neq l.
\end{cases}
\]

Under Assumptions~\ref{H3-4}, each matrix \( \mathbf{K}^{h,AB} \) is a projection matrix, and so is the full matrix \( \mathbf{K}^{AB} \). Therefore, there exists a determinantal sampling design \( DSD(\mathbf{K}^{AB}) \) whose first-order and second-order inclusion probabilities match those specified in Assumptions~\ref{H3-4}. As a consequence, we conclude that:
\[
DSD(\mathbf{K}^{AB}) = \mathcal{P}^{AB}.
\]
\end{proof}
\begin{lemma}
Let \( DSD(\mathbf{K}^1) \) and \( DSD(\mathbf{K}^2) \) be two independent determinantal sampling designs over a population of size \( N \), with \( DSD(\mathbf{K}^2) \) being a Poisson sampling. Let \( \mathbb{S}_1 \sim DSD(\mathbf{K}^1) \) and \( \mathbb{S}_2 \sim DSD(\mathbf{K}^2) \). Consider the vector representation of \( \mathbb{S}_1 \) (resp. \( \mathbb{S}_2 \)), where each entry \( k \in \{1, \ldots, N\} \) takes the value \( 1 \) if \( k \in \mathbb{S}_1 \) (resp. \( \mathbb{S}_2 \)), and \( 0 \) otherwise. Then, the Hadamard (entrywise) product \( \mathbb{S}_1 \odot \mathbb{S}_2 \) follows a determinantal sampling design with kernel \( \mathbf{K}^1 \mathbf{K}^2 \), that is:
\[
\mathbb{S}_1 \odot \mathbb{S}_2 \sim DSD(\mathbf{K}^1 \mathbf{K}^2).
\] 
\end{lemma}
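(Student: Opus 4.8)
The plan is to read off the inclusion probabilities of $\mathbb{S}_1\odot\mathbb{S}_2$ at every order directly from the definition of a determinantal design and to recognise them as the principal minors of $\mathbf{K}^1\mathbf{K}^2$. Fix an arbitrary subset $s\subseteq\{1,\dots,N\}$. In the indicator representation the event $\{s\subseteq\mathbb{S}_1\odot\mathbb{S}_2\}$ is exactly $\{s\subseteq\mathbb{S}_1\}\cap\{s\subseteq\mathbb{S}_2\}$, so by the independence of $DSD(\mathbf{K}^1)$ and $DSD(\mathbf{K}^2)$,
\[
\mathbb{P}(s\subseteq\mathbb{S}_1\odot\mathbb{S}_2)=\mathbb{P}(s\subseteq\mathbb{S}_1)\,\mathbb{P}(s\subseteq\mathbb{S}_2)=\det\!\big(\mathbf{K}^1_{|s}\big)\,\det\!\big(\mathbf{K}^2_{|s}\big).
\]

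Next I would invoke the hypothesis that $DSD(\mathbf{K}^2)$ is Poisson, which by the first bullet of Section~\ref{DSD} means that $\mathbf{K}^2$ is diagonal. Because right multiplication by a diagonal matrix only rescales columns, restriction to the index set $s$ commutes with the product, i.e. $(\mathbf{K}^1\mathbf{K}^2)_{|s}=\mathbf{K}^1_{|s}\,\mathbf{K}^2_{|s}$; combined with the multiplicativity of the determinant this gives
\[
\det\!\big(\mathbf{K}^1_{|s}\big)\,\det\!\big(\mathbf{K}^2_{|s}\big)=\det\!\big(\mathbf{K}^1_{|s}\,\mathbf{K}^2_{|s}\big)=\det\!\big((\mathbf{K}^1\mathbf{K}^2)_{|s}\big).
\]
Since $s$ is arbitrary, the law of $\mathbb{S}_1\odot\mathbb{S}_2$ assigns probability $\det((\mathbf{K}^1\mathbf{K}^2)_{|s})$ to $\{s\subseteq\mathbb{S}\}$ for every $s$, which is the assertion. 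I would also point out that the diagonality of $\mathbf{K}^2$ is what makes this work: the factorisation in the first display is valid for any two independent determinantal designs, but $\det(\mathbf{K}^1_{|s})\det(\mathbf{K}^2_{|s})$ need not equal $\det((\mathbf{K}^1\mathbf{K}^2)_{|s})$ when neither factor is diagonal.

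The only delicate point — and the step I would single out — is that $\mathbf{K}^1\mathbf{K}^2$ need not be Hermitian, hence is not a kernel in the strict sense used in Section~\ref{DSD}. This is harmless: writing $(\mathbf{K}^2)^{1/2}$ for the (entrywise nonnegative) square root of the diagonal matrix $\mathbf{K}^2$, the matrix $\widetilde{\mathbf{K}}:=(\mathbf{K}^2)^{1/2}\mathbf{K}^1(\mathbf{K}^2)^{1/2}$ is Hermitian, satisfies $0\preceq\widetilde{\mathbf{K}}\preceq\mathbf{K}^2\preceq\mathbf{I}_N$ because $0\preceq\mathbf{K}^1\preceq\mathbf{I}_N$, and therefore is Hermitian contracting; moreover conjugation by a diagonal matrix again commutes with restriction, so $\det(\widetilde{\mathbf{K}}_{|s})=\det(\mathbf{K}^2_{|s})\det(\mathbf{K}^1_{|s})=\det((\mathbf{K}^1\mathbf{K}^2)_{|s})$ for all $s$. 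Thus $\widetilde{\mathbf{K}}$ is a bona fide kernel defining the very same design, and $\mathbb{S}_1\odot\mathbb{S}_2\sim DSD(\mathbf{K}^1\mathbf{K}^2)$ should be read in this sense. Everything else is bookkeeping.
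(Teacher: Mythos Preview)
Your proof is correct and follows the same route as the paper: use independence to factor $\Pr(s\subseteq\mathbb{S}_1\odot\mathbb{S}_2)$, then exploit the diagonality of $\mathbf{K}^2$ to identify the product of determinants with $\det((\mathbf{K}^1\mathbf{K}^2)_{|s})$. Your additional paragraph on the non-Hermitian issue, via the symmetrised kernel $(\mathbf{K}^2)^{1/2}\mathbf{K}^1(\mathbf{K}^2)^{1/2}$, is a welcome clarification that the paper handles more casually by simply asserting that $\mathbf{K}^1\mathbf{K}^2$ is contracting.
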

\begin{proof}
$\mathbf{K}^1$ and $\mathbf{K}^2$ are contracting matrices, so is $\mathbf{K}^1\mathbf{K}^2$. Let $s\in 2^U$, we have: $\Pr(s \subseteq  \mathbb{S}_1 \odot \mathbb{S}_2)=\Pr(s \subseteq \mathbb{S}_1,s\subseteq  \mathbb{S}_2)=\Pr(s \subseteq \mathbb{S}_1 )\Pr(s \subseteq  \mathbb{S}_2)=\det(\mathbf{K}^1_{|s})\det(\mathbf{K}^2_{|s})=\det(\mathbf{K}^1_{|s}\mathbf{K}^2_{|s})=det((\mathbf{K}^1\mathbf{K}^2)_{|s})$, since $\mathbf{K}^2$ is diagonal.

\end{proof}

\subsection{Proof of Proposition \ref{Probacible}}

Let $\mathbb{S}^A_{|U^A_k}$ be the restriction of $\mathbb{S}^A$ to $U^A_k$. $\mathbb{S}^A_{|U^A_k}$ is a determinantal random with kernel $\mathbf{K}^A_{|U^A_k}$. 
Let $\mathbb{S}^{AB}_{|(U^{A}_k,k)}$ be the restriction of $\mathbb{S}^{AB}$ to $(U^{A}_k,k)$. $\mathbb{S}^{AB}_{|(U^{A}_k,k)}$ is a determinantal random sample with kernel $\mathbf{K}^{AB}_{|(U^{A}_k,k)}$, that is diagonal. Both sampling design are formally defined over the same population $U^A_k$. 
\paragraph{}
Unit $k$ will be selected in $\mathbb{S}^{2B}$ if $\mathbb{S}^A_{|U^A_k} \odot \mathbb{S}^{AB}_{|(U^{A}_k,k)}$ is different from the empty set. That is equivalent to $U^A_k$ is not fully selected in the complementary of $\mathbb{S}^A_{|U^A_k}\odot\mathbb{S}^{AB}_{|(U^{A}_k,k)}$, which is determinantal with kernel $\mathbf{I}_{N^A_k}-\mathbf{K}^A_{|U^A_k}\mathbf{K}^{AB}_{(|U^{A}_k,k)}$, as a consequence:
\paragraph{}
$$\pi^{2B}_k=\Pr(\{k\} \subseteq \SS^{2B})=1-\Pr(U^A_k \subseteq \left ( \mathbb{S}^A_{|U^A_k}\odot\mathbb{S}^{AB}_{|(U^{A}_k,k)}\right )^c)=1-\det(\mathbf{I}_{N^A_k}-\mathbf{K}^A_{|U^A_k}\mathbf{K}^{AB}_{(|U^{A}_k,k)}).$$

\paragraph{}

For the joint probabilities, we write $\pi^{2B}_{kl}=\Pr(\{k,l\} \subseteq \SS^{2B})=\Pr(\{k\} \subseteq \SS^{2B})+\Pr(\{l\} \subseteq \SS^{2B})-\Pr(\{k\} \subseteq \SS^{2B}\text{ or }\{l\} \subseteq \SS^{2B})$. The first two terms are given by the previous result. We therefore focus on $\Pr(\{k\} \subseteq \SS^{2B}\text{ or }\{l\} \subseteq \SS^{2B})$:
$$
\begin{array}{lll}
&=&\Pr(\{\exists i \in U^A_k \mid \{i,(i,k)\}\in \SS^A\times\SS^{AB}\}\text{ or }\{\exists j \in U^A_l \mid \{j,(j,l)\}\in \SS^A\times\SS^{AB}\})\\

&=&\Pr(\{\exists i \in U^A_k \cup  U^A_l \mid \{i,(i,k)\}\in \SS^A\times\SS^{AB}\}\text{ or }\{\exists j \in U^A_k \cup U^A_l \mid \{j,(j,l)\}\in \SS^A\times\SS^{AB}\})\\
&=&\Pr(\{\exists i \in U^A_k \cup  U^A_l \mid \{i \in \mathbb{S}^A\}\text{ and }\{\exists i \in U^A_k \cup  U^A_l \mid (i,k)\text{ or }(i,l) \in \SS^{AB}\})\\
\end{array}$$

Under Assumptions~\ref{H3-4}, both sub-samples \( \mathbb{S}^{AB}_{|(U^A_k \cup U^A_l, k)} \) and \( \mathbb{S}^{AB}_{|(U^A_k \cup U^A_l, l)} \) are Poisson random samples. In this specific case, their sum
\[
\mathbb{S}^{AB}_{|(U^A_k \cup U^A_l, k)} + \mathbb{S}^{AB}_{|(U^A_k \cup U^A_l, l)}
\]
is also a Poisson random sample, with kernel 
\[
\mathbf{K}^{AB}_{|(U^A_k \cup U^A_l, k)} + \mathbf{K}^{AB}_{|(U^A_k \cup U^A_l, l)}.
\]

As a consequence, the event 
\[
\left\{ \exists i \in U^A_k \cup U^A_l \mid i \in \mathbb{S}^A \right\} \quad \text{and} \quad \left\{ \exists i \in U^A_k \cup U^A_l \mid (i,k) \in \mathbb{S}^{AB} \text{ or } (i,l) \in \mathbb{S}^{AB} \right\}
\]
is equivalent to the event that the vector
\[
\mathbb{S}^{A}_{U^A_k \cup U^A_l} \odot \left( \mathbb{S}^{AB}_{|(U^A_k \cup U^A_l, k)} + \mathbb{S}^{AB}_{|(U^A_k \cup U^A_l, l)} \right)
\]
has at least one non-zero entry. This situation mirrors the one encountered when evaluating \( \pi^{2B}_k \), leading to:
\[
\Pr\left( \{k\} \subseteq \mathbb{S}^{2B} \text{ or } \{l\} \subseteq \mathbb{S}^{2B} \right) = 1 - \det \left( \mathbf{I}_{N^A_{k\ell}} - \mathbf{K}^A_{|U^A_k \cup U^A_l} \left( \mathbf{K}^{AB}_{|(U^A_k \cup U^A_l, k)} + \mathbf{K}^{AB}_{|(U^A_k \cup U^A_l, \ell)} \right) \right),
\]
and ultimately to the final result.

\section{Greedy Optimization Algorithm over Orthogonal Projection Matrices with Prescribed Diagonal}\label{Greedy}
\begin{theorem}[\citet{loonis2019determinantal}]\label{TheoLO}
Let $\mathbf{K}$ be a Hermitian contracting matrix, $(i,j)\in U\times U$ such that $K_{ii} \neq K_{jj}$ for $i\neq j$ and $K_{ij} \neq 0$. Let $W_{ij}(\theta)$ be the Givens unitary
operator whose matrix relative to the canonical basis has $\cos\theta$ at the $(i,i)$ and $(l, l)$ entries, $-\sin\theta$ and $\sin\theta$ at the $(i,j)$ and $(j,i)$ entries, respectively, $1$ at all other diagonal entries, and $0$ at all other off-diagonal entries, where:

$$t=\frac{2Re(K_{ij})}{K_{ii}-K_{jj}}, \cos \theta = \frac{1}{\sqrt{1+t^2}} \text{ and }\sin \theta = t \cos \theta.$$

Then matrix $\mathbf{K}(\theta)=W_{ij}(\theta)\mathbf{K}W_{ij}(\theta)^{\intercal}$ has the same diagonal and spectrum as $\mathbf{K}$.
\end{theorem}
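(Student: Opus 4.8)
The plan is to reduce everything to the $2\times2$ principal submatrix of $\mathbf{K}$ indexed by $\{i,j\}$. The spectrum part is essentially free: $W_{ij}(\theta)$ is a real orthogonal matrix — on the $\{i,j\}$ coordinates it restricts to the $2\times2$ rotation $R(\theta)$ and it is the identity on all other coordinates, so $W_{ij}(\theta)\,W_{ij}(\theta)^\intercal=\mathbf{I}_N$. Hence $\mathbf{K}(\theta)=W_{ij}(\theta)\,\mathbf{K}\,W_{ij}(\theta)^\intercal=W_{ij}(\theta)\,\mathbf{K}\,W_{ij}(\theta)^{-1}$ is obtained from $\mathbf{K}$ by an orthogonal (indeed unitary) similarity; it therefore has the same characteristic polynomial, hence the same spectrum, and it remains Hermitian, and in particular contracting. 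No computation is needed for this half.

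For the diagonal I would argue in two moves. First, $W_{ij}(\theta)$ acts as the identity on every coordinate other than $i$ and $j$: its $m$-th row equals $e_m^\intercal$ for each $m\notin\{i,j\}$, so $\mathbf{K}(\theta)_{mm}=K_{mm}$ for all such $m$, and the only diagonal entries that can change are the $i$-th and the $j$-th. Second, since the $i$-th and $j$-th rows of $W_{ij}(\theta)$ are supported on columns $i$ and $j$, the $\{i,j\}$-block of $\mathbf{K}(\theta)$ is exactly $R(\theta)\,\mathbf{K}_{|\{i,j\}}\,R(\theta)^\intercal$. As $R(\theta)$ is orthogonal, this conjugation preserves the trace, so $\mathbf{K}(\theta)_{ii}+\mathbf{K}(\theta)_{jj}=K_{ii}+K_{jj}$; it therefore suffices to establish the single identity $\mathbf{K}(\theta)_{ii}=K_{ii}$, the equality $\mathbf{K}(\theta)_{jj}=K_{jj}$ following automatically.

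The final step is the explicit $2\times2$ computation. Expanding $R(\theta)\,\mathbf{K}_{|\{i,j\}}\,R(\theta)^\intercal$ and using that $\mathbf{K}$ is Hermitian (so $K_{ii},K_{jj}\in\mathbb{R}$ and $K_{ij}+K_{ji}=2\operatorname{Re}(K_{ij})$) gives
\[
\mathbf{K}(\theta)_{ii}=\cos^2\theta\,K_{ii}+\sin^2\theta\,K_{jj}-2\cos\theta\sin\theta\,\operatorname{Re}(K_{ij}),
\]
the contributions of $\operatorname{Im}(K_{ij})$ cancelling. With $\cos^2\theta+\sin^2\theta=1$, the equation $\mathbf{K}(\theta)_{ii}=K_{ii}$ collapses to $\sin^2\theta\,(K_{jj}-K_{ii})=2\cos\theta\sin\theta\,\operatorname{Re}(K_{ij})$; since $K_{ii}\neq K_{jj}$ one may divide (the case $\operatorname{Re}(K_{ij})=0$ simply forcing $\theta=0$, hence $\mathbf{K}(\theta)=\mathbf{K}$), obtaining $\tan\theta=\dfrac{2\operatorname{Re}(K_{ij})}{K_{ii}-K_{jj}}=t$, with the sign fixed by the orientation convention adopted for the Givens rotation. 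Finally $\cos\theta=1/\sqrt{1+t^2}$ and $\sin\theta=t\cos\theta$ are precisely the values realising $\tan\theta=t$, which closes the argument.

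I do not anticipate a genuine obstacle: the proof is bookkeeping once the reduction to the $\{i,j\}$-block and the trace argument are in place. The two points that demand care are (i) verifying that the imaginary part of $K_{ij}$ really drops out of the transformed diagonal — this is what allows a single real angle $\theta$ to neutralise the change — and (ii) tracking the sign convention of $W_{ij}(\theta)$ so that the formula for $t$ comes out exactly as stated; the non-degeneracy hypothesis $K_{ii}\neq K_{jj}$ is exactly what makes $t$, and hence $\theta$, well defined.
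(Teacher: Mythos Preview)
The paper does not actually prove this theorem: it is stated as a result from \citet{loonis2019determinantal} and used without proof in Algorithm~\ref{algo:rotation-optimization}. There is therefore no ``paper's own proof'' to compare against.

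Your argument is correct and complete. The spectrum claim is indeed immediate from orthogonal similarity, and your reduction to the $\{i,j\}$-block, combined with the trace trick to avoid computing $\mathbf{K}(\theta)_{jj}$ separately, is clean and standard. The explicit $2\times2$ computation is right: for $m\notin\{i,j\}$ the $m$-th row and column of $W_{ij}(\theta)$ are $e_m^\intercal$ and $e_m$, so those diagonal entries are untouched, and on the $\{i,j\}$-block one has exactly the rotation conjugation you describe.

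One small point worth making explicit: with the sign convention stated in the theorem ($-\sin\theta$ at $(i,j)$, $+\sin\theta$ at $(j,i)$), the equation $\mathbf{K}(\theta)_{ii}=K_{ii}$ in fact yields $\tan\theta=\dfrac{2\operatorname{Re}(K_{ij})}{K_{jj}-K_{ii}}=-t$, not $+t$. You were right to flag the sign convention as a point requiring care; either the stated $t$ or the stated placement of $\pm\sin\theta$ carries a sign slip (the theorem's conclusion is unaffected, since replacing $\theta$ by $-\theta$ swaps the two conventions). This is a cosmetic issue in the statement rather than a gap in your proof.
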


\begin{algorithm}[H]
\caption{Iterative optimization over the set of Hermitian matrices with prescribed spectrum and diagonal via Givens rotations \citep{loonis2019determinantal}}
\label{algo:rotation-optimization}
\begin{algorithmic}[1]
\Require Auxiliary information matrix \( X \), initial contracting matrix \( \mathbf{K}_0 \), objective function \( C_X(\mathbf{K}) \), number of iterations \( R \)
\Ensure Optimized contracting matrix \( \mathbf{K}^* \)
\State Set \( \mathbf{K} \gets \mathbf{K}_0 \)
\For{$r = 1$ to $R$}
    \For{$i = 1$ to $N-1$}
        \For{$j = i+1$ to $N$}
            \If{ \( K_{ii} \neq K_{jj} \) \textbf{and} \( K_{ij} \neq 0 \) }
                \State Compute the Givens rotation angle \( \theta^r_{ij} \) according to Theorem~\ref{TheoLO}
                \State Set \( \mathbf{K}' \gets W_{ij}(\theta^r_{ij})\, \mathbf{K}\, W_{ij}(\theta^r_{ij})^\intercal \)
                \If{ \( C_X(\mathbf{K}') < C_X(\mathbf{K}) \) }
                    \State Update \( \mathbf{K} \gets \mathbf{K}' \)
                \EndIf
            \EndIf
        \EndFor
    \EndFor
\EndFor
\State \Return \( \mathbf{K}^* \gets \mathbf{K} \)
\end{algorithmic}
\end{algorithm}
If $\mathbf{K}_0$ is a projection matrix, then $\mathbf{K}^*$ is a projection matrix with the same diagonal as $\mathbf{K}_0$.

\section{Figures}\label{Figures}

\begin{figure}[h!]
\begin{tikzpicture}
  \begin{axis}[
    ybar stacked,
    bar width=8pt,
    ymin=0,
    ymax=8.5, % ajusté à la plus grande somme empilée
    xtick=data,
    xticklabels={0,...,15},
    xlabel={Itération step and optimisation variable},
    ylabel={$\sum_{p=1}^P \operatorname{cv}^2(\hat{t}^{HT}_{\mathbf{x}^A_p})+ \sum_{q=1}^Q \operatorname{cv}^2(\hat{t}_{\mathbf{x}^B_q})$},
    width=0.8\textwidth,
    height=7cm,
    enlarge x limits=0.05,
    legend style={at={(0.48,0.98)}, anchor=north, legend columns=2}
  ]
  
  % Série 1 (en bas)
  \addplot+[fill=blue!50] coordinates {
    (0,0.9091326) (1,0.2868835) (2,0.2868835) (3,0.2868835)
    (4,0.2694231) (5,0.2694231) (6,0.2694231) (7,0.2580023)
    (8,0.2580023) (9,0.2580023) (10,0.2450473) (11,0.2450473)
    (12,0.2450473) (13,0.2432377) (14,0.2432377) (15,0.2432377)
  };

  % Série 2 (au-dessus)
  \addplot+[fill=red!70] coordinates {
    (0,7.2957885) (1,3.3624090) (2,2.5977644) (3,0.9688115)
    (4,0.8023343) (5,0.7228812) (6,0.6895573) (7,0.6714423)
    (8,0.6497847) (9,0.6338751) (10,0.6304729) (11,0.6178024)
    (12,0.6075587) (13,0.6074171) (14,0.6008512) (15,0.5932753)
  };

  \legend{$\sum_{p=1}^P \operatorname{cv}^2(\hat{t}^{HT}_{\mathbf{x}^A_p})$, $\sum_{q=1}^Q \operatorname{cv}^2(\hat{t}_{\mathbf{x}^B_q})$}
\node at (axis cs:1, 0.2868835 + 3.3624090 + 0.25) {\tiny $\mathbf{K}^A$};
  \node at (axis cs:4, 0.2694231 + 0.8023343 + 0.25) {\tiny $\mathbf{K}^A$};
  \node at (axis cs:7, 0.2580023 + 0.6714423 + 0.25) {\tiny $\mathbf{K}^A$};
  \node at (axis cs:10, 0.2450473 + 0.6304729 + 0.25) {\tiny $\mathbf{K}^A$};
  \node at (axis cs:13, 0.2432377 + 0.6074171 + 0.25) {\tiny $\mathbf{K}^A$};

  \node at (axis cs:2, 0.2868835 + 2.5977644 + 0.35) {\tiny $\mathbf{\Theta}^{\!AB}$};
  \node at (axis cs:5, 0.2694231 + 0.7228812 + 0.35) {\tiny $\mathbf{\Theta}^{\!AB}$};
  \node at (axis cs:8, 0.2580023 + 0.6497847 + 0.35) {\tiny $\mathbf{\Theta}^{\!AB}$};
  \node at (axis cs:11, 0.2450473 + 0.6178024 + 0.35) {\tiny $\mathbf{\Theta}^{\!AB}$};
  \node at (axis cs:14, 0.2432377 + 0.6008512 + 0.35) {\tiny $\mathbf{\Theta}^{\!AB}$};

  \node at (axis cs:3, 0.2868835 + 0.9688115 + 0.55) {\tiny $\mathbf{\Pi}^{1AB}$};
  \node at (axis cs:6, 0.2694231 + 0.6895573 + 0.55) {\tiny $\mathbf{\Pi}^{1AB}$};
  \node at (axis cs:9, 0.2580023 + 0.6338751 + 0.55) {\tiny $\mathbf{\Pi}^{1AB}$};
  \node at (axis cs:12, 0.2450473 + 0.6075587 + 0.55) {\tiny $\mathbf{\Pi}^{1AB}$};
  \node at (axis cs:15, 0.2432377 + 0.5932753 + 0.55) {\tiny $\mathbf{\Pi}^{1AB}$};
  \end{axis}
\end{tikzpicture}
\caption{Values of $\sum_{p=1}^P \operatorname{cv}^2(\hat{t}^{HT}_{\mathbf{x}^A_p})+ \sum_{q=1}^Q \operatorname{cv}^2(\hat{t}_{\mathbf{x}^B_q})$ after each optimisation step.}\label{Fig_opt}
\end{figure}

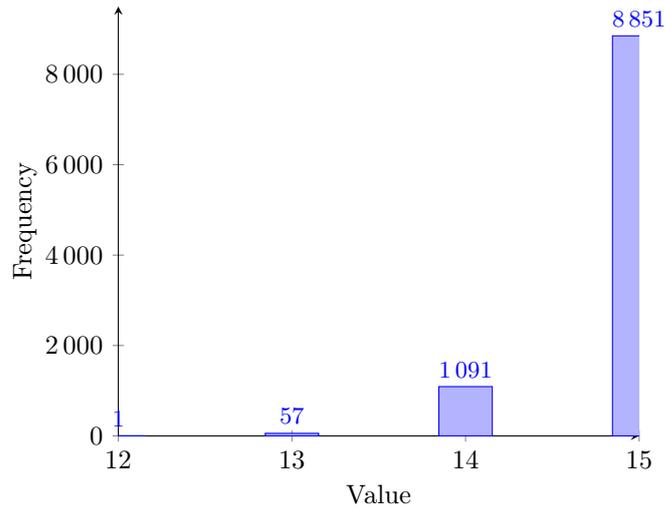
\begin{figure}[h!]
\centering

\begin{tikzpicture}
\begin{axis}[
    ybar,
    bar width=20pt,
    xlabel={Value},
    ylabel={Frequency},
    xtick=data,
    symbolic x coords={12, 13, 14, 15},
    ymin=0,
    ymax=9500,
    nodes near coords,
    enlarge x limits=0.2,
    axis lines=left,
    scaled y ticks=false,
    % Format des nombres partout
    /pgf/number format/use comma=false,
    /pgf/number format/1000 sep={\,},
    every node near coord/.append style={
        font=\small,
        /pgf/number format/use comma=false,
        /pgf/number format/1000 sep={\,}
    }
]
\addplot coordinates {(12,1) (13,57) (14,1091) (15,8851)};
\end{axis}
\end{tikzpicture}
\caption{Breakdown of target sample sizes across 10 000 $\mathbb{S}^{2B}$ simulations with a fixed intermediate sample size of 15.}\label{samplesize}
\end{figure}
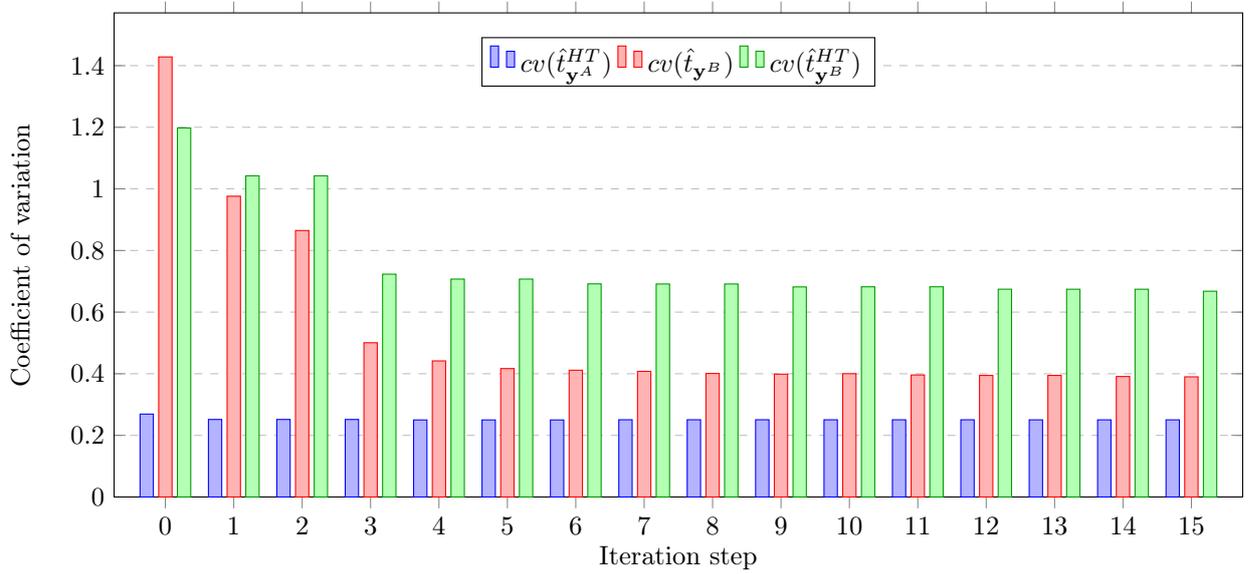
\begin{figure}[h!]
\centering
\begin{tikzpicture}
\begin{axis}[
    ybar,
    bar width=5pt,
    width=\textwidth,
    height=8cm,
    x=0.9cm, % <- augmente l'espacement entre les groupes
    enlarge x limits={0.02, 0.05}, % <- réduit espace avant le 1er groupe
    ylabel={Coefficient of variation},
    xlabel={Iteration step},
    symbolic x coords={0,1,2,3,4,5,6,7,8,9,10,11,12,13,14,15},
    xtick=data,
    legend style={at={(0.5,0.95)},
      anchor=north,legend columns=3},
    ymin=0,
    ymajorgrids=true,
    grid style=dashed
]

\addplot+[blue, fill=blue!30] coordinates {
  (0,0.2691) (1,0.2516) (2,0.2516) (3,0.2516)
  (4,0.2501) (5,0.2501) (6,0.2501) (7,0.2507)
  (8,0.2507) (9,0.2507) (10,0.2506) (11,0.2506)
  (12,0.2506) (13,0.2504) (14,0.2504) (15,0.2504)
};

\addplot+[red, fill=red!30] coordinates {
  (0,1.4283) (1,0.9762) (2,0.8649) (3,0.5006)
  (4,0.4417) (5,0.4168) (6,0.4115) (7,0.4078)
  (8,0.4009) (9,0.3987) (10,0.4002) (11,0.3960)
  (12,0.3945) (13,0.3946) (14,0.3910) (15,0.3900)
};

\addplot+[green!60!black, fill=green!30] coordinates {
  (0,1.1978) (1,1.0425) (2,1.0425) (3,0.7232)
  (4,0.7072) (5,0.7072) (6,0.6917) (7,0.6913)
  (8,0.6913) (9,0.6818) (10,0.6822) (11,0.6822)
  (12,0.6744) (13,0.6743) (14,0.6743) (15,0.6678)
};

\legend{$cv(\hat{t}^{HT}_{\mathbf{y}^A})$, $cv(\hat{t}_{\mathbf{y}^B})$, $cv(\hat{t}^{HT}_{\mathbf{y}^B})$}

\end{axis}
\end{tikzpicture}
\caption{Estimation precision of $t_{\mathbf{y}^A}$ and $t_{\mathbf{y}^B}$ over iterations using Horvitz-Thompson or GWSM.}\label{Fig_opt_interet}
\end{figure}

\end{document}